%% file: main.tex
\documentclass[11pt]{article}
\usepackage[margin=1in]{geometry}
\usepackage[T1]{fontenc}
\usepackage[utf8]{inputenc}
\usepackage{amsmath}
\usepackage{amssymb}
\usepackage{amsthm}
\usepackage{booktabs}
\usepackage{graphicx}
\usepackage{hyperref}
\usepackage{url}
\input{pandoc-support}

\hypersetup{
  pdftitle={Agentic Synthesis against Counterexample-Supplemented Sketches},
  pdfauthor={Muness Castle and Eric Rubeck},
  hidelinks
}

\newtheorem{definition}{Definition}
\newtheorem{lemma}{Lemma}
\newtheorem{theorem}{Theorem}

\newcommand{\sketch}{\mathcal{S}}
\newcommand{\archive}{\mathcal{A}}
\newcommand{\regression}{\mathcal{R}}
\newcommand{\anchors}{\mathcal{K}}
\newcommand{\gate}{\mathcal{G}}
\newcommand{\strategy}{\mathcal{H}}

\title{Agentic Synthesis against Counterexample-Supplemented Sketches}
\author{Muness Castle\\{\small Independent}\\{\small\texttt{muness@muness.com}} \and Eric Rubeck}
\date{August 2026}

\begin{document}
\maketitle
\paperlicense

\begin{abstract}
Coding agents can fix a failing example without preserving the domain rule that made it fail.
We present agentic synthesis against counterexample-supplemented sketches, a repository-native
method for systems whose policy is discovered during implementation.  A human starts with a
partial sketch, and a coding agent compiles a replaceable projection.  When simulation exposes
missing or mistaken policy, an operator approves the corrected behavior and the minimum general
rule the case authorizes.  Every Developer call names its change authority and the rules, holes,
anchors, and approved behavior that must survive.  Conflict or ambiguous permission leaves the
files unchanged and produces a clarification question.  A complete archive preserves provenance;
a curated regression set gates distinct
boundaries.  Before another candidate is revealed, the active case and curated regressions must
pass both deterministic approved-output comparison and a separate review against the current
sketch.  Periodic clean regeneration tests whether the sketch carries the learned policy.

We demonstrate the method with CatSynth, a captured synthetic application.  In one open-world run
with GPT-5.4-mini, 8 of 14 frozen candidates
became counterexamples.  Under the corrected protocol, replay-all, evolved-sketch rebuild, and
retained Sketch--CE each passed all 8 accepted cases.  They passed 14, 17, and 16 of 21 withheld
cases, respectively.  Sketch review rejected premature empty-input and tag policies and restored
dropped anchors; adjudicated reviewer errors did not become policy.  One model and one reveal
order cannot establish general correctness or superiority.  On this suite, the second check
exposed drift hidden by deterministic replay, and the reviewed sketch passed three more withheld
cases than raw example replay.
\end{abstract}

\section{Introduction}
\label{sec:introduction}

Coding agents make repairs cheap before the repository has captured every domain rule the
repair must obey.  A developer can ask for a patch from a failing test, a state delta, a prompt
instruction, or a reviewer note.  The agent can satisfy that signal and still violate the rule
that made an SME reject the first repair.  If the rule stays in chat or review memory, the next
agent run can make the same plausible mistake.

The repair loop turns that rejected repair into a reviewed change to the governing model.  An
SME correction supplies the proposed output and the rule it exemplifies.  If the case counters
the current sketch, the system raises it to an operator.  Only explicit approval gives the case
specification authority.  The Developer then revises the sketch, deterministic code, and
prompt-mediated surface under known-code anchors.  It receives one active failure.  The gate
receives the current regression set.  No new counterexample is revealed until the active case
and selected regressions pass the deterministic gate and a capable model or person reviews their
simulated outputs against the current sketch.

Tests and prompts remain part of the loop.  Tests catch shape errors, exceptions, bad deltas, and
regressions.  Prompts can state intent for the current run.  An accepted counterexample has a
different job: force a reviewed change to the sketch and preserve why that change happened.
Regression cases are selected from the accepted archive to test later implementations.

The loop needs six artifacts:

\begin{enumerate}
  \item a \textbf{sketch} $\sketch$: a partial program in prose and code-shaped structure;
  \item an \textbf{accepted-counterexample archive} $\archive$: the complete provenance of
        approved failures and corrected behavior;
  \item a \textbf{regression set} $\regression \subseteq \archive$: selected executable cases
        that reject known wrong implementations;
  \item \textbf{known-code anchors} $\anchors$: codebase-local shapes, APIs, conventions, and
        reference paths the agent must preserve;
  \item a \textbf{dual-oracle implementation surface}: deterministic code for encodable holes
        and prompt-mediated completion for narrative holes;
  \item a \textbf{gate} $\gate$: replay plus approved-output comparison over $\regression$.
\end{enumerate}

The workflow has two checks.  The gate is deterministic: it replays state changes and compares
encoded policy-bearing fields with approved outputs.  Sketch review is separate: a capable model
or person compares each simulated output with the current sketch.  The same reviewer may also
approve a policy change when authorized, but the decisions remain separate.  First decide
whether the output follows the sketch; only then decide whether to repair the implementation or
approve a change to the sketch.

\begin{center}
\setlength{\fboxsep}{6pt}
\fbox{%
\begin{tabular}{c}
\textbf{Method at a glance} \\
\midrule
Candidate failure counters the current sketch \\
$\Downarrow$ \\
Operator explicitly approves the counterexample and corrected behavior \\
$\Downarrow$ \\
Developer revises sketch $\sketch$, code, and prompt for that failure \\
$\Downarrow$ \\
Gate runs the active case and regression set $\regression$ \\
$\Downarrow$ \\
Model or person reviews the same simulated outputs against sketch $\sketch$ \\
$\Downarrow$ \\
Both pass: curate $\regression$ and reveal the next case; either fails: return one failure
\end{tabular}}

{\small\emph{The loop turns a tempting wrong repair into durable repository guidance.}}
\end{center}

The initial sketch records the human's current strategy: which repairs are permitted,
prioritized, or forbidden.  During synthesis, the Developer may reorganize or generalize that
strategy while it edits code and prompts.  Every accepted counterexample must leave a reviewed
rule in the evolved sketch.  The archive records all cases that earned specification authority.
The regression set preserves a selected executable boundary.  Operator-approved outputs remain
authoritative; a model-drafted explanation cannot overrule them.  The code is replaceable: a
fresh implementation should be regenerable from the evolved sketch and known-code anchors,
without replaying the archive as prompt context.

\paragraph{Claim.}
A repository-native workflow is inspectable when every accepted counterexample connects its
operator decision, archive record, and reviewed sketch clause.  Selected cases also connect to
the replay check that verifies state repair and the approved-output compare check that rejects a
tempting wrong implementation.  Each cycle also records review of the simulated outputs against
the current sketch.  Passing the gate gives maintainers a bounded invariant over the current
regression set, replay/checker code, and approved expected rows.  Sketch review adds judgment,
not proof.  Neither claim extends beyond the selected cases and current artifacts.

\paragraph{Boundary.}
The claim is bounded.  Existing agentic workflows can make tests and prompts
durable.  Classical CEGIS already has counterexample discipline when its formal assumptions
hold.  Repository work is smaller and messier: the human strategy, the accepted
counterexample, and the executable check often live in separate artifacts.  The repair loop
stores the links among them without treating the result as a solver-backed proof.

\paragraph{What the paper provides.}
\begin{enumerate}
  \item a repository-native artifact: the counterexample-supplemented sketch;
  \item a process model that turns operator-approved corrections into an evolved sketch, a
        complete counterexample archive, and a curated regression set, then requires both the
        deterministic gate and review against the sketch before advancing;
  \item two hole-filling roles: Oracle~A, deterministic code the agent maintains, and
        Oracle~B, prompt-mediated completion for sketch-declared narrative holes;
  \item a finite-regression theorem for replay and approved-output compare gates and a clear boundary
        between that proof and model- or person-backed sketch review;
  \item a worked example and captured comparison, with the complete audit and reproduction
        supplement in Appendix~\ref{app:catsynth-supplement}.
\end{enumerate}

\section{Originating setting}
\label{sec:originating-setting}

The enterprise deployment was the proving ground.  Production failures arrived as concrete rows
with incorrect outputs.  SMEs loaded those rows into a review surface and corrected the outputs
they expected.  When a proposed counterexample contradicted the current sketch, the system raised
it to the operator for explicit approval.  Every accepted counterexample changed the sketch.  A
complete archive retained that decision history, while a selected subset served as the golden
regression set.

That pressure appeared before the loop was formalized.  CEGIS fit the need because it turns a
failure into pressure on the next candidate.  We adapted that discipline to agentic repair: an
approved production failure became an input/output specification, the repository recorded the
corresponding sketch change, an agent edited code or prompt surfaces, and the gate replayed the
selected regression scenarios and compared policy-relevant fields.  The review application also
let an SME compare concrete outputs with the current sketch.  The sketch had to carry the learned
rules well enough to support a fresh rewrite of the implementation.

The deployment itself included an SME review application, local agent actions from the IDE,
model-backed code and prompt repair, and a harness for fixtures, replay, approved-output compare,
counterexample approval, regression selection, and provenance.  The deployed workflow used
review against the sketch; the public CatSynth capture described later records the deterministic
gate but not a separate post-repair sketch-review verdict for every selected case.  The AWS and
IDE plumbing mattered
operationally.  The evidence boundary is narrower than that plumbing: SMEs judged concrete
outputs; operators approved policy-changing counterexamples; the sketch evolved; and the golden
subset made implementation regressions visible.

CatSynth uses synthetic data and public rules.  It preserves an inspectable version of the
deployment's control structure through fixtures, tests, provenance, and a small remediation
case.  A reader can
inspect how a correction becomes an accepted counterexample, how the sketch records the rule, how
an agent repair changes code or prompts, and what the replay and approved-output compare gate checks.  The claim
stays bounded to that inspectable shape.

\section{Problem}
\label{sec:problem}

Coding agents make plausible patches cheap.  A patch can follow the prompt, match local
style, pass the available tests, and still violate a domain rule that no one has written
into the repository.  An unencoded rule can make the patch wrong even when the agent followed
the task.

The repair often goes this way:

\begin{enumerate}
  \item A user gives an agent a domain task.
  \item The prompt names the state gap.
  \item The agent finds a local patch that closes that gap.
  \item A subject-matter expert recognizes a policy violation.
  \item The violation was present in the domain, yet absent from the checkable artifacts.
\end{enumerate}

The patch can be technically clean.  The agent can follow local style.  The tests can pass.
An accepted counterexample records the rule the patch violated: ``this plausible repair is
forbidden for this reason, and future repairs must satisfy this rule.''  If the team keeps
that rule in conversation, the repository cannot reject the same repair again.

\subsection{Tests need the rule they protect}

A test checks behavior that has been encoded.  A rule-linked fixture records the policy
choice that makes one passing repair acceptable and another passing repair wrong.  Two
patches can produce the same expected output while taking different policy paths; the
expected output alone hides that choice.

An accepted counterexample ties the failing fixture to the rule it exposed.  Replay can show
that the repaired row reaches the expected state.  Approved-output compare can check policy-bearing
fields such as operation, target, date, and status.  A sketch clause records why those fields
matter, so a future maintainer or agent sees which policy ordering the assertion protects.  A
separate sketch review checks the simulated output against that full clause instead of assuming
the encoded fields exhaust its meaning.

\subsection{Prompts need repository anchoring}

Prompts can carry domain intent before the team has reduced that intent to deterministic
code.  They can name business rules, narrative criteria, exception handling, and domain
language.  They are fragile when they live only in a chat transcript, notebook, model call, or
agent session outside the repository's review path.

In a repository-native repair, prompt text is a versioned implementation surface.
Oracle~B is prompt-mediated completion constrained by the same sketch and regression gate as
deterministic code.  Prompt text may encode narrative criteria, but the produced rows still
run through the same replay and approved-output compare gate.  The repository stores the prompt, the
fixture, the sketch clause, and the check together.  Before the workflow advances, a capable
model or person also reviews those simulated outputs against the current sketch.

\subsection{A repository-native version of the loop}

Formal synthesis and CEGIS cover cases where the template, specification, and checker fit
inside a formal language.  In that setting, a solver can search a constrained space and
return a result with stronger formal guarantees than a repository gate.  Counterexamples may
come from a verifier, examples may guide inductive search, and oracles may steer component
selection or specification refinement~\cite{solar2006combinatorial,jha2010oracle}.

In the repository case, SMEs can recognize a wrong repair before the team can fully
formalize the rule it violated.  The repository has to keep that rule tied to the artifacts
that enforce it: sketch clause, accepted counterexample, prompt or code surface, replay check,
approved-output compare fields, sketch-review record, and generated provenance.  Some rules move into deterministic code.
Others remain in prompts while the team gathers enough accepted counterexamples to encode them
more tightly.

The gate makes a bounded claim over the regression set, implementation surface, replay
predicate, and approved-output compare predicate.  Sketch review answers a different question:
whether the concrete simulated output follows the current sketch, including meaning the gate
does not encode.  The workflow requires both before it reveals another case.

\section{Lineage}
\label{sec:lineage}

\subsection{Sketch and CEGIS}

Sketching supplies the discipline: the programmer supplies structure and the
synthesizer searches within it~\cite{solar2008sketching,solar2013sketching}.  Earlier
sketching work already used counterexamples to refine candidates in finite-program synthesis
settings~\cite{solar2006combinatorial}.  CEGIS adds the loop: generate a candidate, validate it,
and feed counterexamples into the next step.  Oracle-guided synthesis adds an oracle that guides
or validates component choices~\cite{jha2010oracle}.

The boundary differs in an ordinary repository.  The agent searches across code, prompts,
fixtures, and checks.  An accepted counterexample records the rule that the repair must preserve
and the known bad repair that the gate must reject.

\subsection{Programming by example}

Programming-by-example systems such as FlashMeta and PROSE synthesize DSL programs from examples
using domain-specific deduction~\cite{polozov2015flashmeta,gulwani2017program}.  The borrowed
discipline is that examples can constrain synthesis when the language and operators fit the task.
Here the example set serves a repository repair loop.  Each accepted case ties an SME correction
to a sketch clause, an implementation or prompt surface, and replay and approved-output checks.

\subsection{Agentic coding and tests}

Agent benchmarks such as SWE-bench evaluate whether language models can resolve real repository
issues~\cite{jimenez2024swebench}.  Tests-as-prompts work studies tests as both input and
evaluation for LLM code generation~\cite{cui2025tests}.  The borrowed discipline is to steer
agents with repository artifacts and executable checks.

The boundary is the unit of review.  For policy-heavy repairs, every accepted counterexample
should name the sketch clause that changed, the tempting repair it rejects, the code or prompt
surface that handles the rule, and the gate check that enforces it.

\subsection{Prompt engineering, specs, and natural-language coding}

Prompt programming and prompt-based learning treat natural-language prompts as a control surface
for language models~\cite{reynolds2021promptprogramming,liu2023pretrainpromptpredict}.
Language-Oriented Programming argues that natural language can lower the barrier for non-experts
who contribute to software projects~\cite{beheshti2024nlop}.  Promptware engineering asks teams
to treat prompts like software artifacts with requirements, tests, debugging, evolution, and
monitoring~\cite{chen2025promptware}.  Copilot Workspace describes a product path from idea to
code to software in natural language~\cite{dohmke2024copilotworkspace}.
End-user software engineering is the older concern: domain experts often create or maintain
computational artifacts without being professional software developers~\cite{ko2011enduser}.

Spec-driven agent tools make a related move.  Kiro uses requirements, design, and tasks to make
model assumptions visible before implementation~\cite{swaminathan2025kiro}.  GitHub Spec Kit
centers agentic development on a Spec--Plan--Tasks--Implement sequence and treats specs as
structured context for agents~\cite{github2026speckit}.  Specification by Example supplies the
older discipline: concrete examples become living documentation when they are executable enough
to constrain future work~\cite{adzic2011specification}.

Sketch-CE makes the destination of feedback explicit.  If the current sketch already states the
right rule, a failing implementation is a regression: Developer repairs code or prompts under
the unchanged sketch.  It is not a new counterexample.  A counterexample, in this method, is a
case that exposes a missing or mistaken rule in the sketch and receives explicit operator
approval.  Every such acceptance revises the sketch before the implementation is accepted.  In
Argyris's terms, regression repair is single-loop correction; counterexample acceptance is
double-loop learning because the correction changes the governing policy or
objective~\cite{argyris1977doubleloop}.  The archive preserves that policy history, and the
regression set checks selected consequences against later implementations.

Living-specification workflows can support the same kind of learning.  The narrower distinction
is procedural: Sketch-CE places explicit counterexample approval and reviewed sketch revision
inside the repair protocol instead of treating them as exceptional upstream processes.  The
Developer may propose a revision, but the operator decides whether the counterexample is valid,
which rule it exposes, and which expected result is authoritative.  Every accepted CE must
change the sketch.  If Developer rewrites the sketch merely to justify its patch, or silently
changes policy without operator approval, the protocol has failed.

The boundary is that natural language is one mutable surface among several.  In this loop, a spec
means the evolved sketch, accepted-counterexample archive, selected regression set, and replay
and approved-output harness together.  If a rule lives in a prompt, the prompt gets the same review
pressure as code: a sketch clause, a discriminating regression, deterministic checks, review of
its output against the sketch, and an owner.

\subsection{Positioning boundary}

The repository loop uses those disciplines at a narrower boundary.  The sketch records
the rule.  The archive records operator-approved counterexamples.  The regression set retains
selected executable discriminators.  The coding agent edits code and prompts.  The gate replays
the selected cases and compares approved policy-bearing fields.  A capable model or person then
reviews the simulated outputs against the current sketch.

\begin{center}
\scriptsize
\begin{tabular}{p{0.18\linewidth}p{0.22\linewidth}p{0.22\linewidth}p{0.27\linewidth}}
\toprule
 & Sketch / CEGIS & Programming by example & Repository loop \\
\midrule
Human input & Partial program with holes & Examples inside a DSL & Sketch, operator-approved CEs, regression selection, anchors \\
Synthesizer & Solver-backed search & DSL learner & Coding agent editing code and prompts \\
Failure signal & Counterexample from validator & Missing or inconsistent example & Gate failure or SME correction \\
Validation & Formal or bounded decision procedure & Example consistency & Deterministic gate plus sketch review over active case and $\regression$ \\
Result boundary & Solver-relative result & DSL/example consistency & Finite-regression result plus recorded review against the sketch \\
\bottomrule
\end{tabular}
\end{center}

The last column is the checklist for an accepted case: operator decision, sketch clause, archive
record, edited code or prompt surface, gate check that rejects the known bad repair, and review of
the simulated output against the current sketch.

\section{Artifacts}
\label{sec:artifacts}

For each accepted counterexample, readers should be able to trace six versioned repository
artifacts and a cycle review record.  The sketch records the learned rule.  The archive preserves every approved case and why it changed
the sketch.  The regression set keeps selected executable discriminators.  Known-code anchors
record local conventions for the agent.  The implementation surface is the replaceable code or
prompt text the agent edits.  The gate checks that surface through replay and approved-output
comparison.  A per-cycle review record shows how a capable model or person judged the simulated
output against the current sketch.
When either check fails, maintainers decide whether the implementation violated existing policy
or the sketch needs an approved change instead of silently treating every failure as new policy.

\begin{table}[t]
\begin{center}
\small
\begin{tabular}{p{0.17\linewidth}p{0.34\linewidth}p{0.36\linewidth}}
\toprule
Artifact & Reader question & Failure if blurred \\
\midrule
Sketch & What learned policy must a fresh implementation preserve? & Agents depend on history that the governing model never captured. \\
Archive & Which cases changed policy, who approved them, and why? & The origin of a sketch rule disappears. \\
Regression set & Which known wrong implementations must the gate reject? & The complete archive is confused with the routinely executable checks. \\
Anchors & What local code shape must the agent preserve? & The agent invents a second convention. \\
Implementation surface & Which code or prompt applies the rule? & A paper rule cannot be traced to the surface that handles it. \\
Gate & What behavior is executable now? & Readers cannot see what the repository checked. \\
Sketch-review record & Did the simulated output follow the current sketch? & A green gate is mistaken for complete semantic validation. \\
\bottomrule
\end{tabular}
\end{center}
\caption{Artifact responsibilities from a reader's point of view.}
\label{tab:artifact-map}
\end{table}

\begin{definition}[Sketch]
A sketch $\sketch$ is a reviewable, agent-editable artifact that records the strategy space the
agent may use and the holes it may fill.  It may contain prose, tables, pseudo-code, type shapes,
policy order, abstention rules, and examples of forbidden repairs.  A human supplies the initial
sketch.  Developer proposes a revision for every accepted counterexample; an operator or
maintainer reviews the result.  The current sketch is the evolved synthesis of all accepted
counterexamples, not merely the starting prompt.
\end{definition}

Maintainers and agents use the sketch to find the policy shape:

\begin{enumerate}
  \item Which operations exist?
  \item Which operation has priority when several repairs close the same state gap?
  \item Which fields are policy-bearing?
  \item Which decisions belong in deterministic code?
  \item Which decisions belong in prompt-mediated narrative completion?
  \item Which cases force abstention or escalation?
\end{enumerate}

The sketch is less formal than a full specification.  Maintainers use it to hold the current
governing model, including rules that are not yet fully encoded.  They can discard code and
prompt implementations, regenerate them from the sketch and anchors, and use the regression set
to find what the new implementation lost.

\begin{definition}[Accepted-counterexample archive]
The archive $\archive$ is the complete set of operator-approved counterexamples.  Each record
contains the observed input, authoritative corrected output, tempting wrong output, rule that
distinguishes them, approval provenance, and the sketch revision it caused.
\end{definition}

Before approval, an observation or correction is evidence, not policy.  The operator gives it
specification authority by accepting it as a counterexample to the current sketch.  Acceptance
requires a sketch change.  The archive remains complete even when not every case is retained as
a routine regression.

\begin{definition}[Regression set]
The regression set $\regression = \{(x_i, y_i)\}_{i=1}^{m} \subseteq \archive$ is the curated
set of executable cases the current gate must satisfy.  A retained case should discriminate
against a known tempting implementation or protect a policy boundary not covered by another
selected CE.
\end{definition}

Separating $\archive$ from $\regression$ prevents two mistakes.  The team does not discard the
history behind an evolved rule merely because its original case is redundant in CI.  It also
does not treat the complete historical archive as generation context or require every archived
row to run forever.  Small examples may choose $\regression = \archive$; CatSynth does.

\begin{definition}[Known-code anchor]
A known-code anchor $\anchors$ is an existing source artifact the agent must follow while
editing: an API, result type, parser convention, error shape, fixture format, test helper,
reference implementation, or prompt structure.  Maintainers use anchors to reduce the risk of
parallel local inventions.
\end{definition}

Maintainers use anchors to tie a rule to the repository shape that already carries similar
rules.  The sketch records the policy.  Anchors record the local conventions the agent should
preserve while editing.  They matter when a local test can pass with a new result shape that no
caller expects.

\begin{definition}[Implementation surface]
An implementation surface is the versioned code or prompt text that applies a sketch rule.
Deterministic code surfaces handle rules that can be encoded directly.  Prompt-mediated
surfaces handle narrative or policy choices that still require language-model completion.
\end{definition}

Agents repair implementation surfaces under the sketch and anchors.  Readers should be able to
trace the surface from the accepted case that exposed the hole to a regression check that
rejects the known wrong repair.  If that trace is missing, a rule can remain true in prose while
no edited code or prompt applies it.  The surface is not the durable policy record; it should be
replaceable from the evolved sketch.

\begin{definition}[Gate]
A gate $\gate$ is the executable validation bundle that evaluates a candidate strategy
$\strategy$ over the regression set $\regression$.  The gate has two layers:
replay and approved-output compare.
\end{definition}

Replay checks whether the proposed output changes the world state according to the encoded
state predicate.  Approved-output compare checks whether the output used the encoded policy-bearing
fields from the approved expectation.  Reporting the two layers separately lets maintainers
distinguish state repair from policy repair.  That distinction matters because the most
dangerous repair can be the one that makes the arithmetic look right while choosing the wrong
operation.

\begin{definition}[Sketch review]
Sketch review runs the active case and selected regressions through the current implementation,
then gives their inputs, outputs, relevant traces, and the current sketch to a capable model or
person.  The reviewer records whether each output follows the sketch, which clauses apply, and
whether a failure is an implementation defect or a possible sketch gap.
\end{definition}

Sketch review is not part of the deterministic gate and does not strengthen the
finite-regression theorem in Section~\ref{sec:correctness}.  It covers meaning the current
checkers may not encode.  The workflow advances only after the gate and sketch review pass for
the active case and $\regression$.

\section{Roles}
\label{sec:roles}

The loop works only if each decision is explicit.  An SME, capable model, or other reviewer
compares concrete outputs with the current sketch and explains failures.  An authorized operator
explicitly approves policy-changing counterexamples.
Maintainers preserve the archive, curate regression cases, and keep the machinery and anchors
inspectable.  Agents propose sketch revisions and repair implementation surfaces under those
constraints.  Gates run selected checks; they do not decide domain truth.

\subsection{Subject-matter expert}

The SME judges concrete outputs through a review interface.  A capable model may perform the same
comparison when the workflow permits it.  When an output is wrong, the reviewer supplies the
corrected output and the reason it is correct.  That correction gives the operator evidence to
consider a counterexample.  If the harness extracts an input/output spec, that extraction is
clerical; the approved correction remains the source of domain judgment.

The SME's correction has three parts:

\begin{enumerate}
  \item the observed input;
  \item the corrected output;
  \item the explanation that distinguishes the corrected output from the tempting wrong one.
\end{enumerate}

The explanation matters because it guides the sketch update and tells the agent what
general rule the fixture exemplifies.  A corrected row alone can invite memorization.  A
corrected row plus a reason says which tempting repair the loop must reject.

\subsection{Operator}

The operator owns the policy transition.  When a failing case contradicts or extends the current
sketch, the system raises the proposed counterexample, corrected output, and missing rule for
explicit approval.  Rejection leaves the sketch unchanged.  Approval makes the case
authoritative, adds it to $\archive$, and requires a sketch revision.  The operator may be the
SME, the platform maintainer, the same reviewer who found the failure, or another authorized
reviewer.  What matters is the decision sequence: first decide whether the output follows the
current sketch; only then decide whether to repair the implementation or approve a sketch change.
Developer must not make an unauthorized policy change to justify its own output.

\subsection{Platform maintainer}

The maintainer owns the machinery around approval and regression selection.  They keep the
sketch and fixtures tied to each accepted case, preserve the complete archive, curate
$\regression$, and maintain the runtime, review path, replay, approved-output compare, tests,
sketch-review records, and provenance.

SMEs supply the judgments and operators authorize policy changes.  The maintainer makes those
decisions durable: the runtime runs, the gate checks the right fields, the output shape stays
compatible, accepted rules remain inspectable, and a clean implementation can be rebuilt from
the evolved sketch.

\subsection{Coding agent}

The agent repairs the artifact under constraints.  It may edit deterministic code, prompt
text, fixtures, or tests.  It does not decide which domain rule is true.  Its job is to
make a repair that satisfies $\sketch$, $\anchors$, and $\gate$ without breaking
the repository's existing shape, then expose the result for review against $\sketch$.

The agent should receive four forms of context before editing:

\begin{enumerate}
  \item the current sketch, code, and prompt;
  \item one active counterexample or failed regression, projected to its checked fields;
  \item the operator-approved policy that explains the failure;
  \item the known-code anchors that define output shape and local conventions.
\end{enumerate}

The archive and regression set are not bulk repair context.  With one active failure, the agent
has a narrower job: revise the sketch with the general rule, reject the known tempting patch, and
keep the codebase's existing shape.  A failed regression may later be returned by itself for an
implementation repair under that already-evolved sketch.

\subsection{Oracle A: deterministic code}

Oracle~A handles policy that maintainers can encode as deterministic, reviewable source:
grouping, filtering, ordering, type construction, deterministic abstention, and domain
calculations.  The agent can repair Oracle~A because the rule has been made concrete enough
to encode.  The operator-approved SME correction remains the source of truth.

\subsection{Oracle B: prompt-mediated completion}

Oracle~B handles narrative or under-encoded policy: ambiguous notes, human descriptions,
client-specific language, or judgment that belongs in a model prompt at the current stage
of the system.  The prompt has no authority on its own.  Oracle~B remains constrained by
the sketch and checked against selected SME-approved regressions by replay, approved-output
compare, and review of its simulated outputs against the sketch.

\subsection{Hybrid resolver}

A hybrid resolver tries Oracle~A first, then routes abstentions to Oracle~B.  Maintainers
keep deterministic policy on the reliable path and leave narrative completion explicit.
A rule can begin in prompt-mediated completion, then move into deterministic code once
enough counterexamples make it precise.

\section{Counterexample lifecycle}
\label{sec:lifecycle}

The sequence begins with a partial sketch, not with a finished implementation.  Developer first
generates deterministic code and prompt surfaces from that sketch and the known-code anchors.  A
candidate case becomes a counterexample only when it exposes missing or mistaken sketch policy
and an operator explicitly approves it.  Every accepted CE changes the sketch.  Developer sees
one active failure at a time.  The deterministic gate and sketch review see the active case plus
the curated regression set.

\begin{quote}
\textbf{Repository synthesis loop}
\begin{verbatim}
(code, prompt) = Developer(S0, empty implementation, K)
A = empty accepted-counterexample archive
R = initial acceptance checks

for observed case c:
    output = simulate(current implementation, c)
    judgment = review output against S
    if judgment passes:
        record c as coverage; continue

    if S already states the required behavior:
        repair the implementation under S
        make c the active failure
        continue to two-check validation

    proposal = (c, corrected output, missing sketch rule)
    if operator explicitly rejects proposal:
        record decision; continue or escalate

    archive approved proposal in A
    (S, code, prompt) = Developer(S, code, prompt, c, K)
    require S to change; operator reviews the revised sketch

    failures = Gate(active c + R) + SketchReview(S, active c + R)
    while failures is not empty:
        f = one failed case
        (code, prompt) = Developer(S, code, prompt, f, K)
        failures = Gate(active c + R) + SketchReview(S, active c + R)

    curate R: retain c if existing selected CEs do not protect its boundary

periodically:
    discard code and prompt
    (code, prompt) = Developer(S, empty implementation, K)
    repair one failure at a time until Gate(R) and SketchReview(S, R) pass
\end{verbatim}
\end{quote}

A passing proposal does not become a counterexample merely because it was scheduled for review.
It is coverage.  A failing proposal becomes specification only after explicit approval.  A new
proposal remains hidden until the active case and current regressions pass.  The archive is
complete; the regression set is intentionally selective.  ``Pass'' here means that the
deterministic gate passes and a capable model or person accepts the simulated outputs against the
current sketch.  Periodic clean regeneration applies both checks to test whether the evolved
sketch, rather than retained implementation history, carries the learned policy.

\subsection{Observation}

An observed case reaches the review surface.  It can come from production data, a synthetic
fixture, or a reviewer.  The system proposes an output.  The SME marks the output correct or
supplies a correction; a capable model may perform this comparison when the workflow permits it.
At this point the case is evidence; no one has approved it as specification.  First decide
whether the output follows the current sketch.  A failure already governed by the sketch is a
regression.  A missing or mistaken sketch rule is a proposed counterexample.

\subsection{Correction}

The reviewer correction names the desired output and the rule behind it.  Before operator
approval it is still a proposal.  An LLM may draft a generalized clause or explanation but may
not make that proposal authoritative unless the workflow explicitly grants it that authority.
When useful, the reviewer also names the tempting wrong repair: an output
that could satisfy replay while violating policy.  That gives the operator a rule to inspect
instead of a bare bad row.

\subsection{Spec extraction}

The harness or maintainer translates the correction into a proposed input/output spec.  An LLM
can draft the general sketch clause, especially when the SME explanation is prose.  The operator
reviews the proposal.  The model's wording is advisory; the explicitly approved expected fields
remain the source of truth.

\subsection{Operator approval}

The operator accepts only cases they are willing to treat as specification.  Approval is an
explicit policy decision, not automatic elevation of every correction or every failed test.  An
accepted case enters $\archive$ with enough information to explain the sketch change and reject
the tempting wrong implementation.

Before approval, the harness runs the proposed case against the current implementation.  If the
case already passes on the fields it claims to constrain, the harness records coverage and does
not propose it as a counterexample.  During approval, the operator and maintainer also check the
attached artifacts.  The comparer may need a new field, the fixture may need a clearer name, and
the failure packet must expose only the fields this counterexample constrains.

\subsection{Sketch revision}

For every accepted CE, Developer revises the sketch together with code and prompts.  It may
generalize the active failure, reorganize earlier clauses, or leave a newly declared policy hole
open.  It receives the current sketch and implementation, known-code anchors, and one projected
failure packet.  It does not receive unrevealed cases, the archive, or the regression set as bulk
prompt context.

The accepted CE authorizes the minimum general rule required by its corrected output and approved
clause.  That rule is not an invention merely because it was absent from the previous sketch;
capturing missing policy is the purpose of the revision.  The CE does not authorize adjacent
choices that its correction does not settle.  Before a CE is active, including during initial
compilation, prior sketch rules and explicit holes remain authoritative.  A later CE cannot
retroactively authorize an earlier proposal.

The revised sketch must state general policy rather than paste the concrete row.  The operator
reviews it against the approved correction.  A CE that leaves the sketch unchanged was
misclassified or incompletely processed.  Reviewers can reject an edit that invents policy or
erases an earlier rule, while the regression gate catches selected behavioral losses.

\subsection{Repair}

Developer repairs Oracle~A, Oracle~B, and the sketch under known-code anchors.  A CE repair turn
targets exactly one approved counterexample and must revise the sketch.  If the subsequent gate
finds behavior already governed by the revised sketch, that failure is a regression: Developer
repairs the implementation under the current policy.  It may clarify wording but may not change
policy without another operator-approved counterexample.  The repair is not complete until both
the deterministic gate and sketch review pass for the active case and $\regression$.

\subsection{Regression selection}

After the active CE passes the gate and sketch review, the maintainer decides which executable
case should protect the boundary it exposed.  The CE may enter $\regression$, or an existing
selected CE may already cover the rule.  This curation does not
remove the accepted case from $\archive$.  The archive answers why policy changed; the
regression set asks whether the current implementation still obeys selected consequences.

\subsection{Validation and fresh regeneration}

The gate runs replay and approved-output compare over $\regression$ and, during a CE cycle, the
active case before curation.  Sketch review runs those same cases in simulation and compares the
outputs with $\sketch$.  If either check fails, the harness selects one failed case as the next
Developer input and repeats both checks after repair.  No new candidate is revealed until both
pass.

Periodically the team discards the implementation, regenerates it from $\sketch$ and $\anchors$,
and runs $\regression$ through the gate and sketch review.  Needing the full archive to
reconstruct policy is evidence that the sketch has not synthesized the accepted lessons well
enough.  A passing gate supports the Section~\ref{sec:correctness} result for the current
strategy, regression set, checkers, fixtures, and approved expected rows.  The accompanying
sketch review is recorded judgment, not part of that theorem.

\section{Gate semantics}
\label{sec:gate-semantics}

The gate runs the checks in the current regression set.  It does not prove the program correct
or replay the full history of policy discovery.  It answers a narrower question: for every
selected row, does the candidate repair the encoded state and preserve the encoded policy
fields?

\begin{definition}[Replay]
Replay applies a candidate output $r$ to an input state $x$ and checks the encoded state
repair.  It accepts when the proposed change closes that state gap under the domain-specific
replay code.  It rejects when the state gap remains or the candidate cannot be applied.  Replay
may compute balances, apply updates, recalculate derived fields, or simulate downstream effects.
\end{definition}

\begin{definition}[Approved-output compare]
Approved-output compare checks policy-bearing fields against the approved expectation $y$: operation
kind, selected entity, work date, issue type, status mutation, route, priority, or any other
field encoded by the comparer.  It accepts when those fields match.  It rejects a mismatch even
when replay accepts the state repair.
\end{definition}

\begin{definition}[Gate pass]
For a strategy $\strategy$ and regression set $\regression$, the gate passes iff every
$(x_i, y_i) \in \regression$ yields $r_i = \strategy(x_i)$ such that replay accepts $(x_i, r_i)$
and approved-output compare accepts $(y_i, r_i)$.
\end{definition}

Replay and approved-output compare must stay separate because they catch different wrong repairs.  A candidate
can repair state and still choose the wrong policy field.  For example, it can close the hours
math while selecting append instead of update; replay accepts the state change, and compare
rejects the operation kind.

\begin{table}[t]
\begin{center}
\small
\begin{tabular}{p{0.22\linewidth}p{0.16\linewidth}p{0.17\linewidth}p{0.33\linewidth}}
\toprule
Candidate behavior & Replay & Compare & Interpretation \\
\midrule
Open state gap & reject & not reached & Candidate fails encoded state repair. \\
Closed state, wrong op & accept & reject & Candidate repairs state but violates a policy field. \\
Closed state, right op & accept & accept & Case passes the current gate checks. \\
\bottomrule
\end{tabular}
\end{center}
\caption{Replay and approved-output compare reject different failures.}
\label{tab:gate-matrix}
\end{table}

\subsection{Design rules for gates}

A gate should be deterministic, local, and specific.  It should run locally, identify the
regression case that failed, and preserve the difference between replay failure and compare
failure.  The failure message should point the repair agent to the rule to inspect.

Good gates have these properties:

\begin{enumerate}
  \item \textbf{Regression coverage.}  Every selected regression runs.  The maintainer records
        which archived policy boundary each case protects.
  \item \textbf{Failure specificity.}  Replay and approved-output compare failures are reported separately.
  \item \textbf{Deterministic CI path.}  Replay and approved-output compare do not depend on a
        live model call.
  \item \textbf{Source links.}  Each failure can be traced to sketch clause, scenario, and check.
  \item \textbf{Tempting-patch rejection.}  The gate includes at least one check that fails the
        previously plausible wrong repair.
\end{enumerate}

These rules are design constraints, not formal assurances.  A gate that omits a field cannot
check that field.  A comparer with a bug can accept the wrong behavior.  Any correctness claim
is bounded by the current regression set and the current checkers.

\subsection{Review against the sketch}

The deterministic gate is necessary but not sufficient to advance the workflow.  After each
repair, run the active case and $\regression$ through the current implementation in simulation.
Give the current sketch, each input, output, and relevant trace to a capable model or person.
Record whether the output follows the sketch and which clauses control the decision.

A failed review has two possible meanings.  If the sketch already states the required behavior,
repair the implementation without changing policy.  If the failure exposes missing or mistaken
policy, raise a proposed counterexample for approval.  The same reviewer may make both judgments
when authorized, but they remain separate decisions so a bad implementation cannot justify
itself by rewriting the sketch.

A live model may perform sketch review.  The model is excluded from the deterministic gate, not
from the workflow.  The next candidate remains hidden until both the gate and sketch review pass
for the active case and $\regression$.  Because sketch review may be stochastic or depend on
human judgment, its verdict is recorded evidence rather than an input to the finite-regression
proof below.

\section{Finite-regression correctness}
\label{sec:correctness}

When $\gate$ passes, it proves only the cases in $\regression$ under the repository's current
replay function, approved-output comparer, fixtures, and approved expected rows.  It does not prove behavior on
cases outside $\regression$ or on policy fields the checker does not encode.  Accepting a new
counterexample expands $\archive$ and changes $\sketch$; it does not mechanically enlarge
$\regression$.  The active case must pass during that cycle, and the maintainer then selects the
regression that will protect its policy boundary.

\begin{definition}[$\regression$-correctness]
A strategy $\strategy$ is $\regression$-correct with respect to replay function $P$ and compare
predicate $C$ when, for every $(x_i, y_i) \in \regression$,
$P(x_i, \strategy(x_i)) = \mathrm{true}$ and
$C(y_i, \strategy(x_i)) = \mathrm{true}$.
\end{definition}

\begin{lemma}[Replay soundness relative to the implementation]
If replay returns true for $(x, r)$, then $r$ satisfies the state-repair condition encoded by
replay for $x$.
\end{lemma}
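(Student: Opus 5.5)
The lemma is essentially definitional, so the plan is to make that explicit rather than to search for a substantive argument. I would fix the meaning of ``the state-repair condition encoded by replay for $x$'' as the predicate $\mathrm{Rep}_x(r)$ that holds exactly when the domain-specific replay code, run on input state $x$ and candidate output $r$, applies $r$ and finds the encoded state gap closed. By the Replay definition, $P(x,r)$ accepts precisely in that situation. It rejects when the gap remains or when $r$ cannot be applied. So the first step is to write the encoded condition as $\mathrm{Rep}_x(r) :\Leftrightarrow P(x,r)=\mathrm{true}$, reading it off from the replay code itself and not from any external domain specification.

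The second step is the one-line derivation. Assume $P(x,r)=\mathrm{true}$. Then by the characterization just fixed, $r$ was applicable to $x$ and the post-application state satisfies the encoded gap-closure check. That is exactly $\mathrm{Rep}_x(r)$. I would split on the two rejection clauses of the Replay definition (``cannot be applied'' and ``gap remains''). A true verdict excludes both, so it must come from the single acceptance clause.

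The third step is to state the implicit hypothesis. Replay is taken to be a deterministic, total function of $(x,r)$, as the gate design rules require: there is no live model call and the CI path is deterministic. This guarantees that ``returns true'' is a well-defined property of the pair $(x,r)$, and not of a particular run.

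The main obstacle is interpretive rather than technical. One must resist reading the lemma as soundness with respect to the \emph{true} domain state-repair condition. That stronger claim is false whenever the replay code has a bug or omits a field, as the paper notes for comparers in the gate design rules. I would handle this by phrasing the conclusion strictly relative to the implementation, as in the lemma's title. I would also add a remark that any gap between encoded and intended conditions is the maintainer's responsibility and lies outside the finite-regression claim. This lemma then feeds directly into the later $\regression$-correctness argument. That argument only needs that a passing gate certifies $P(x_i,\strategy(x_i))=\mathrm{true}$, and hence $\mathrm{Rep}_{x_i}(\strategy(x_i))$, for each selected case.
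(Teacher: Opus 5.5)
Your proposal is correct and takes the same approach as the paper: the paper's proof also treats replay as the executable state-repair predicate that returns true exactly when the encoded condition holds, so a true verdict entails that condition by definition. The determinism hypothesis and the case split on the rejection clauses are harmless but not needed, since the lemma only concerns a returned verdict for a given $(x,r)$. The paper also presents its gate design rules as design constraints, not formal assurances.
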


\begin{proof}
For this gate, replay is the executable state-repair predicate.  It evaluates the candidate
output against the input state and returns true exactly when the encoded state-repair
predicate holds.  A true replay result entails the state-repair condition the gate implements.
\end{proof}

\begin{lemma}[Approved-output compare soundness relative to the regression set]
If approved-output compare returns true for expected output $y$ and candidate output $r$, then $r$
agrees with $y$ on every policy-bearing field encoded by the comparer.
\end{lemma}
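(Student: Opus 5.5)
The argument will run parallel to the replay-soundness lemma just above: it unfolds the definition of the comparer rather than reasoning about any domain content. I first make explicit what ``the comparer'' is as a mathematical object. I model approved-output compare as a predicate $C(y,r)$ determined by a finite set $F$ of encoded policy-bearing fields (operation kind, selected entity, work date, status mutation, and so on, as listed in the definition of approved-output compare). I also fix, for each $f \in F$, a projection $\pi_f$ that the comparer applies to both the approved expectation and the candidate. The definition says compare ``accepts when those fields match'' and ``rejects a mismatch.'' I read this as the defining equivalence
\[
C(y,r) = \mathrm{true} \iff \pi_f(y) = \pi_f(r) \text{ for all } f \in F,
\]
with the caveat that equality is the comparer's own field-level equality (possibly after normalization). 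Under that reading, ``agrees with $y$'' in the statement means agreement in the comparer's sense.

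The proof then has two short steps. First, I assume $C(y,r)=\mathrm{true}$ and take an arbitrary encoded field $f \in F$. Second, I apply the forward direction of the equivalence to get $\pi_f(y)=\pi_f(r)$. Since $f$ was arbitrary, $r$ agrees with $y$ on every field in $F$, which is exactly the conclusion. I will add one remark, matching the paper's bounded framing. The lemma says nothing about fields outside $F$, and nothing about whether $y$ is the correct domain answer. It transfers only the operator's approval of $y$ on the encoded fields. This is why the lemma is ``relative to the regression set'': it is stated for the approved rows $y_i$ that populate $\regression$.

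The main obstacle is not the inference but the modeling step. The lemma is only as true as the identification of the executable comparer with the predicate above. The gate-design discussion explicitly allows a comparer with a bug that ``can accept the wrong behavior.'' I will therefore state, as the proof's standing assumption, that the comparer implements its field specification, meaning it returns true exactly when its encoded fields agree. In other words, I take ``the policy-bearing fields encoded by the comparer'' to be \emph{defined} as the fields whose agreement its acceptance certifies. This makes the lemma hold by construction, in the same way replay soundness holds by treating replay as the executable state-repair predicate. It also pushes any implementation bug outside the theorem's scope rather than into a gap in the proof. With both lemmas in hand, the finite-regression theorem will follow by conjoining them over each $(x_i,y_i)\in\regression$.
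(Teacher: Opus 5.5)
Your proposal is correct and takes the same approach as the paper, whose proof likewise unfolds the comparer's definition: the comparer enumerates the encoded policy-bearing fields and fails on any mismatch, so a true result means every encoded field matched $y$. Your field set $F$, projections $\pi_f$, and standing assumption that the executable comparer implements that specification make explicit what the paper leaves implicit; only the direction ``mismatch implies reject'' is actually needed.
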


\begin{proof}
The comparer enumerates the checked policy-bearing fields and fails on mismatch.  A true
compare result means every encoded policy-bearing field matched the approved expectation.
\end{proof}

\begin{theorem}[Finite-regression gate soundness]
If $\gate$ passes for strategy $\strategy$ on regression set $\regression$, then $\strategy$ is
$\regression$-correct with respect to the replay and approved-output compare semantics encoded by the
repository.
\end{theorem}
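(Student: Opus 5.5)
The plan is to unfold the definitions and discharge each conjunct of $\regression$-correctness pointwise over the finite set $\regression$. I will read the Gate-pass definition with its universal quantifier over $(x_i,y_i)\in\regression$ and the $\regression$-correctness definition with the same quantifier. Then it suffices to fix an arbitrary $(x_i,y_i)\in\regression$ and show both $P(x_i,\strategy(x_i))=\mathrm{true}$ and $C(y_i,\strategy(x_i))=\mathrm{true}$, where $P$ and $C$ are instantiated as the repository's replay function and approved-output comparer.

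\textbf{Step 1: the replay conjunct.} Fix $(x_i,y_i)\in\regression$ and set $r_i=\strategy(x_i)$. By the Gate-pass definition, replay accepts $(x_i,r_i)$. Since $P$ is, by choice, exactly the replay predicate the gate executes, this gives $P(x_i,\strategy(x_i))=\mathrm{true}$. The replay-soundness lemma additionally lets me state that $r_i$ satisfies the encoded state-repair condition. This is the semantic reading of the conjunct, not a strengthening of it.

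\textbf{Step 2: the compare conjunct.} The Gate-pass definition likewise gives that approved-output compare accepts $(y_i,r_i)$, so $C(y_i,\strategy(x_i))=\mathrm{true}$. The compare-soundness lemma then gives agreement with $y_i$ on every encoded policy-bearing field. Since $(x_i,y_i)$ was arbitrary, both conjuncts hold for every element of $\regression$. Finiteness of $\regression$ is not needed for the logic; it only guarantees that the gate terminates and can actually report a pass.

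\textbf{Main obstacle.} The hard step is not mathematical but the identification of $P$ and $C$ with what the gate runs. The gate must evaluate the same $\strategy$ on the same inputs, so that $r_i$ computed during the gate run is the $\strategy(x_i)$ in the correctness definition. If $\strategy$ is stochastic, as with Oracle~B, this identification fails unless outputs are fixed. I would therefore state explicitly that the claim is about the recorded outputs $r_i$, or about a deterministic strategy. I would also state that "with respect to the semantics encoded by the repository" means taking $P$ and $C$ to be those executable predicates, bugs included. The proof then makes no claim beyond $\regression$ or beyond the encoded fields.
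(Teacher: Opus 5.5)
Your proof is correct and follows the paper's argument: fix an arbitrary $(x_i,y_i)\in\regression$, take replay and compare acceptance from the gate-pass definition, read them through the two soundness lemmas, and conclude $\regression$-correctness pointwise. Your caveat about stochastic strategies such as Oracle~B is a reasonable addition that the paper leaves implicit by writing $r_i = \strategy(x_i)$. The argument otherwise matches the paper's.
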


\begin{proof}
By definition, $\gate$ iterates over every $(x_i, y_i) \in \regression$, computes
$r_i = \strategy(x_i)$, and requires replay and approved-output compare to pass.  By replay soundness, each
accepted $r_i$ satisfies the state-repair condition encoded by replay for $x_i$.  By compare
soundness, each accepted $r_i$ agrees with $y_i$ on every encoded policy-bearing field.
Thus every selected regression satisfies both predicates, which is exactly
$\regression$-correctness.
\end{proof}

\paragraph{Scope.}
The theorem is bounded to the regression set, the encoded replay checker, the encoded
approved-output comparer, and the current approved expected rows.  It does not cover sketch
review, archived cases omitted from
$\regression$, unapproved failures, unencoded policy fields, future model behavior, or private
deployment behavior outside the reported experiment.  The archive provides provenance, not proof.
If a checker is wrong, maintainers must fix it.  If an approved expected row is wrong, an SME must correct it.
A maintainer can audit which predicates passed, which regressions they covered, which archived
rules those checks represent, and which questions remain outside the proof.

\section{Worked example at sketch level}
\label{sec:worked-example}

\subsection{Scenario}

CatSynth presents a synthetic owner profile with one visible preference gap and one hard policy
rule.  The owner wants a large, fluffy, affectionate cat and has an allergy trait.  The naive
resolver selects Persian because it maximizes the encoded preferences.  The policy resolver
selects Siberian because the fixture marks it as satisfying the same preferences and the hard
allergy rule.  These are illustrative fixture attributes, not pet-selection or medical advice.

The ambiguity is the case's value.  Both candidates close the state gap defined by the replay
predicate.  Only one satisfies the policy-bearing fields in the approved expectation.  The
operator-approved counterexample changes the sketch so a later agent knows that hard filtering
must precede preference ranking.  A selected regression preserves the near miss as an executable
check.

\begin{center}
\small
\begin{tabular}{p{0.22\linewidth}p{0.23\linewidth}p{0.21\linewidth}p{0.22\linewidth}}
\toprule
Repair & State effect & Policy meaning & Gate result \\
\midrule
Persian & preferences satisfied & hard rule ignored & replay accepts; compare rejects \\
Siberian & preferences satisfied & hard rule respected & replay accepts; compare accepts \\
\bottomrule
\end{tabular}
\smallskip

\textit{The worked example separates state repair from policy repair in one small case.}
\end{center}

\subsection{Tempting repair}

The tempting repair ranks breeds only by preference match.  Replay accepts Persian because the
fixture marks it as large, fluffy, and affectionate.  Approved-output compare rejects the output
because the approved expectation names Siberian and cites the hard allergy rule.  The state gap
is closed by an output that violates the encoded policy.

\subsection{Sketch rule}

The accepted counterexample adds a sketch rule: preference match never overrides a hard rule.
The sketch tells the next agent that filtering precedes ranking.  It also names the fields that
future changes must preserve: operation, selected breed, and cited rule identifiers.

\subsection{Replay, approved-output compare, and sketch review}

Replay asks, ``did the proposed output repair the encoded state gap?''  Compare asks, ``did it
use the encoded rule the sketch requires?''  CatSynth needs both questions because the naive
preference result passes the first check and fails the second.  The regression case guards the
policy choice as well as the visible preference match.

The corrected workflow asks one more question: ``does this simulated output follow the current
sketch?''  A reviewer given the allergy rule, the owner profile, and the Persian output should
reject it even if the approved-output comparer omitted a relevant field.  This review checks the
sketch directly; it does not replace either deterministic predicate.

\begin{center}
\includegraphics[width=\linewidth]{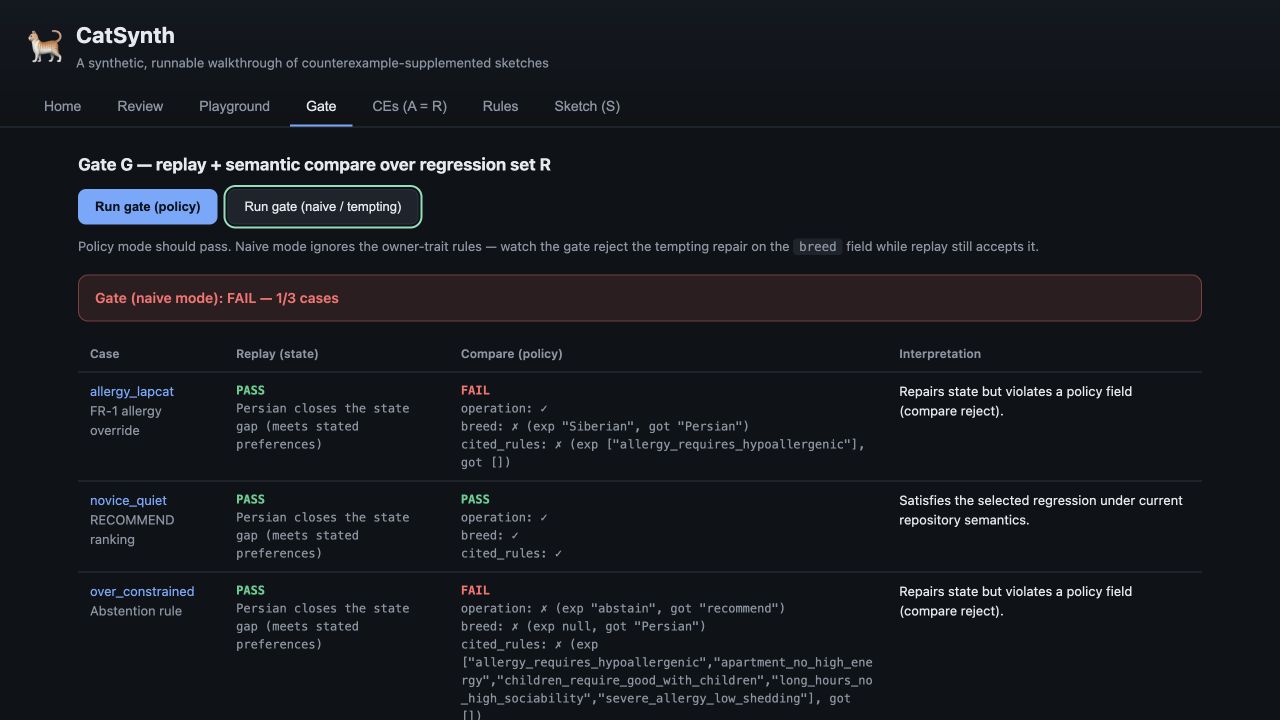}

\small\textit{CatSynth makes the two predicates visible: the naive strategy repairs the encoded
preference gap, so replay passes, but it does not preserve the approved breed and cited-rule
fields, so approved-output compare fails.  The captured UI labels this deterministic check
``semantic compare,'' the repository's original term.}
\end{center}

The method, experimental protocol, and reported results are fully stated in this paper.
Appendix~\ref{app:catsynth-supplement} adds the complete per-generation trace, screenshots,
source map, and reproduction commands.  The repository also publishes that appendix as a
standalone supplement PDF.

\subsection{Captured synthesis trajectory}

The browser shows the finished artifacts.  The experiment reconstructs their synthesis.  It
starts with the initial sketch and empty deterministic and prompt files.  GPT-5.4-mini at low
effort acts as Developer; tools and environment access are disabled.  The candidate cases and
authoritative expected outputs were frozen before the run and treated as a simulated
operator-approved stream.  The model does not approve policy.  Each successful revision is copied into
a separate generation directory with the exact active failure and gate.

The captured experiment runs the model-backed projection and records the deterministic gate after
each repair.  It does not record a separate model- or person-backed review of every active and
selected regression output against the current sketch.  The results below therefore evaluate
sketch evolution, implementation continuity, and encoded regression behavior.  They do not
evaluate the added sketch-review requirement or its cost.

On August 2, 2026, a corrected two-check continuation completed the GPT-5.4-mini benchmark after
the initial wrapper had stopped at an arbitrary 24-repair bound.  Its final visible and withheld
scores are reported below.  GPT-5.3-codex-spark remains incomplete at CE-012 and is a bounded
failure record, not a model ranking.  The repository preserves a compact rerun record.

\begin{center}
\small
\begin{tabular}{p{0.12\linewidth}p{0.30\linewidth}p{0.36\linewidth}p{0.10\linewidth}}
\toprule
Generation & Newly checked policy & Before repair & Gate \\
\midrule
000 & Initial preference strategy & Empty implementation & 1/1 \\
001 & Allergy hard filter & Persian, no cited rule & 2/2 \\
002 & Hard-rule composition and abstention & Balinese, one cited rule & 3/3 \\
003 & Travel note maps to \texttt{avoid\_needy} & No Oracle tag & 4/4 \\
004 & Tag penalty and base-score boundary & Balinese with correct tag & 5/5 \\
005 & Distinct soft rules compose & Incomplete soft ranking & 6/6 \\
006 & Duplicate soft concern applies once & Missing narrative tag & 7/7 \\
007 & Unknown allergy status escalates & Unsafe recommendation & 8/8 \\
008 & Malformed applicable policy escalates & Ignored policy error & 9/9 \\
\bottomrule
\end{tabular}
\end{center}

The separation between generations 003 and 004 is deliberate.  Generation 003 checks only the
prompt-mediated tag.  Its failure packet contains no breed values and leaves the deterministic
meaning of the tag open.  The next proposed case then fails because the correct tag does not yet
produce the approved ranking.  That independent failure becomes the fourth counterexample.

Every one of the eight accepted CatSynth counterexamples changes \texttt{SKETCH.md}.  CatSynth
also retains every accepted case as a regression because the run is small, so
$\regression = \archive$ in this example.  That equality is an experiment choice, not the
general method.

The main decision is not ``iteration or one shot.''  It is whether the governing policy is
already complete.  In a separate closed-world run, the model receives an immutable complete
specification and empty implementation files.  It reaches 20/20 visible and passes 21/21 withheld cases
after four Developer calls.  Reaching visible acceptance consumes 611,519 model tokens: 132,632
for Developer calls and 478,887 for prompt-mediated Runtime Oracle checks.  Post-acceptance
visible and withheld evaluation consumes another 239,929, for 851,448 recorded tokens in total.
When the closed-world premise is true, spec-first is the simpler approach.

A 2026-08-02 continuation reran final acceptance under the complete two-check protocol, with
manual approval of every sketch change and an explicit change-authority and preservation contract
in every Developer prompt.  Replay-all, evolved-sketch rebuild, and retained Sketch--CE all passed
8/8 visible accepted cases and then passed 14/21, 17/21, and 16/21 withheld cases, respectively.
This is the current withheld result: the evolved sketch retains a three-case advantage over raw
replay, while retained code does not beat clean regeneration from the reviewed sketch in this
sample.  Sketch review also rejected premature empty-input and tag policy, restored dropped
anchors, and kept adjudicated reviewer errors from becoming policy.

The earlier open-world capture adds two controls to isolate what carries learned policy.  Replay-all rebuilds
from the initial sketch and all accepted cases known at each epoch, asking the model to infer the
rules again from raw examples.  Evolved-sketch rebuild discards the implementation and receives
only the current evolved sketch and anchors; after a failed gate it receives one visible failure
packet at a time.  It is the clean-regeneration test built into the method.  Tokens through
visible acceptance are 1,021,822 for retained
Sketch--CE, 891,880 for replay-all, and 828,628 for evolved-sketch rebuild.  The Sketch--CE total
comprises 217,576 Developer tokens, 657,478 Runtime Oracle tokens, and 146,768 Specification
Oracle tokens.  Its candidate cases are external inputs: the extra cost comes from evaluating
them and proposing general rules for the failures, not from generating them.  The controls
inherit the resulting promotion schedule, so these totals have different accounting boundaries
and are not end-to-end price rankings.  Post-acceptance evaluation adds about 170,000 tokens to
each path.  Their withheld-case results are 18/21, 15/21, and 19/21 respectively.  Sketch--CE reduces
implementation work and churn in this run, but does not achieve the highest withheld-case
rate.  Its final strategy is also the largest and most branch-heavy: 298 lines and 110 decision
nodes, versus 224/77 for replay-all and 228/70 for evolved-sketch rebuild.  More importantly,
evolved-sketch rebuild passes 19/21 withheld cases versus 15/21 for replay-all.  That pattern
motivated the protocol-correct rerun, but the scores are not evidence for the current method
because this capture lacked required sketch review and approval.  The older table remains useful
for its captured cost and churn measurements, not as the headline evaluation of the two-check
method.

\section{Applying the method}
\label{sec:applying}

Start this loop in an ordinary repository.  Do not wait for a harness.  Use the files the team
already trusts: design notes, fixtures, parser tests, approved expected rows, prompt templates, or review
checklists.  For each accepted counterexample, record which operator approved it, which rule
changed, which example exposed the rule, and which regression would reject the tempting wrong
implementation.

\subsection{Start with the sketch}

Write the known strategy before asking the agent for code.  Put the first sketch in a short file that
names the allowed operations, priority order, known holes, and abstention cases.  Give later
failures a clause to attach to.  Do not present the first sketch as a complete policy.  Ask the
Developer to return the revised sketch with every CE-driven implementation generation, and
review that artifact with the code and prompt it describes.  Every accepted CE must change the
sketch.  If it does not, either the case was a regression rather than a counterexample or the
policy lesson has not been captured.

\subsection{Collect discriminating examples}

Keep examples that separate two plausible repairs.  Use a happy-path case to show the intended
output.  Use a proposed counterexample to show a repair that looks green but violates the rule.
When an SME corrects an agent, raise the case to an authorized operator if it counters the
sketch.  Do not approve examples only because they are available; approve the ones that teach a
missing or mistaken governing rule.

\subsection{Name known-code anchors}

Point the agent at the repository facts it should reuse: result types, parser behavior, fixture
format, error shape, prompt conventions, and naming rules.  Name the exact files, symbols, or
clauses when you can.  Use those anchors to keep the agent from inventing a second local
convention just to pass the current case.

\subsection{Separate replay from approved-output compare}

Build replay and approved-output compare as separate checks.  Use replay to check whether the repaired state
now satisfies the predicate that failed.  Use compare to check whether policy-bearing fields
still match the approved expectation.  Do not hide both jobs inside a single assertion that
says ``the output looks right.''  Read the failure by its job: replay failure means the state
is still broken; compare failure means the state repair passed while a policy field diverged.

\subsection{Review simulated outputs against the sketch}

After every repair, simulate the active case and $\regression$.  Give each input, output,
relevant trace, and the current sketch to a capable model or person.  Record the verdict and the
clauses that control it.  Do not reveal another candidate until this review and the deterministic
gate pass.

Judge the output before changing the sketch.  If the sketch already states the right behavior,
repair the implementation.  If the failure exposes missing or mistaken policy, raise a proposed
counterexample.  The same reviewer may make both decisions when authorized; keeping the
decisions separate prevents the implementation from excusing itself through a sketch edit.

\subsection{Approve counterexamples deliberately}

Triage each proposed case before adding it to $\archive$.  Run it first.  A passing proposal is
coverage, not a counterexample.  Treat fixture bugs as fixture work, checker bugs as checker work,
and behavior already governed by the sketch as an implementation regression.  Raise only
failures that expose missing or mistaken policy.  The operator must explicitly approve the case
and corrected behavior before Developer receives that one CE and revises $\sketch$ with the
implementation.

For each accepted case, record its approval, the tempting implementation it rejects, and the
sketch clause that changed.  Then choose whether that CE belongs in $\regression$ or whether an
already-selected CE protects the same boundary.  This work turns one corrected answer into an
evolved governing model without confusing the complete history with the executable subset.

\subsection{Prove that the sketch carries the learning}

Periodically discard generated code and prompts.  Give Developer only the current evolved
sketch and known-code anchors, then run $\gate(\regression)$ and review the simulated outputs
against the sketch.  Return one failed case at a time until both checks pass.  If regeneration
requires replaying the complete archive as prompt context, revise the sketch: it has not yet
synthesized the accepted policy well enough.

\section{Discussion}
\label{sec:discussion}

\subsection{Tests name behavior; sketches name intent}

A passing test suite covers only encoded behavior.  Tests check behavior in the current run;
sketch clauses preserve why a failure mattered for the next repair.  When a case enters
$\archive$, record the rule a future implementation must preserve alongside the input, approved
output, and operator decision.  Put a discriminating subset in $\regression$.

\subsection{Prompts implement one surface of the contract}

Treat prompts as code that implements policy.  If a rule lives in instructions or narrative
notes, bind that surface to the same sketch and regression gate as source code.  Review prompt
edits as contract changes.  Check their outputs at the deterministic gate and against the sketch
instead of treating them as informal wording.

\subsection{SME corrections become specification inputs}

Treat SME review as specification input.  When an SME corrects an output, ask for the rule the
output violated and raise it to the operator when it counters the sketch.  A correction becomes
specification only after explicit approval adds it to $\archive$ and causes a reviewed sketch
revision; until then it is feedback.

\subsection{Approval and regression selection are separate constraints}

Approval quality controls the governing model; regression selection controls executable scale.
Accept a CE when it exposes missing or mistaken sketch policy.  Preserve every accepted case in
$\archive$.  Retain in $\regression$ only the cases needed to reject the
known wrong implementations and protect distinct boundaries.

\section{Limitations}
\label{sec:limitations}

\begin{enumerate}
  \item \textbf{Finite regression set.}  A passing gate proves only the current
        $\regression$.  The archive may contain accepted cases not selected for routine
        execution, and the result does not extend to unseen cases.
  \item \textbf{Checker quality.}  Replay and approved-output compare are only as good as the
        semantics encoded in their checkers.  A buggy checker can certify the wrong
        behavior until a maintainer repairs the checker and reruns the regression set.
  \item \textbf{Golden dataset quality.}  A wrong row in the golden dataset makes
        the gate enforce the wrong behavior.  A domain reviewer must distinguish a
        corrected answer from a corrected specification before the row becomes part
        of the claim.
  \item \textbf{Approval quality.}  An operator should approve a counterexample only when it
        exposes missing or mistaken sketch policy and names authoritative corrected behavior.
        Developer must not authorize its own policy changes.
  \item \textbf{Regression selection.}  A small $\regression$ can omit an important boundary;
        an indiscriminate one can become expensive noise.  Maintainers must preserve the full
        archive while selecting checks that reject distinct known wrong implementations.
  \item \textbf{Sketch quality.}  Rules left outside the sketch still rely on human memory.
        Every accepted CE must revise the sketch, and clean regeneration should test whether
        those revisions are sufficient.
  \item \textbf{Sketch-review quality.}  A model or person can misread the sketch, share the
        implementation's blind spots, or return inconsistent judgments.  Record the reviewer,
        inputs, outputs, applicable clauses, and verdict.  Sketch review adds evidence; it is not
        proof.
  \item \textbf{Prompt drift.}  Live model behavior can drift.  Review live output
        before approval; do not treat a prompt response as a proof artifact.
  \item \textbf{Captured-protocol boundary.}  The published CatSynth run records the
        deterministic gate after each repair but not a separate post-repair sketch review over
        the active case and selected regressions.  Its results do not measure that review step or
        its cost.
  \item \textbf{Single captured comparisons.}  CatSynth records one stochastic run per path with
        one model and candidate order.  The closed-world run begins with information the
        open-world run must discover, so their token totals answer different questions.  Within
        the open-world experiment, the rebuild controls inherit the promoted stream without
        paying to evaluate the external candidates or propose rules for failures.  The observed
        token ratios and checked results are not a
        benchmark or a causal estimate.
  \item \textbf{Enterprise evidence boundary.}  Readers can inspect the control
        structure in CatSynth's synthetic fixtures.  The reported evidence covers that control
        structure, not private deployment behavior.
\end{enumerate}

These boundaries mark where the proof stops.  The gate checks the finite regression set with
current replay, approved-output compare, and data.  Sketch review remains a recorded judgment.
People still own counterexample approval, regression selection, checker design, golden review,
sketch maintenance, prompt review, and any private deployment claim.

\section{Conclusion}
\label{sec:conclusion}

Do not trust an agent repair just because the patch looks plausible.  When a failure counters
the sketch, require an SME correction and explicit operator approval.  Then change the sketch,
not just the code.

For an accepted case, the required trail is: the SME correction, operator approval, archive
record, revised sketch, one projected failure given to Developer, repaired or regenerated
implementation, the selected regression that protects the learned boundary, and review of the
simulated active and regression outputs against the current sketch.

When those pieces are present, the agent has a bounded job: revise the sketch, code, and prompt
for one approved counterexample, or repair one regression under the current sketch.  The sketch
carries the learned policy.  The archive carries the decision history.  The regression set tests
generated implementations.  The deterministic gate checks encoded behavior; sketch review
checks the simulated output against the current governing model.  The code can be discarded and
regenerated.

A passing gate does not prove the domain correct.  It says only this: for the current repository,
current checkers, and current regression set, replay and approved-output comparison pass.  Everything
outside those encoded cases and predicates still belongs to review and future discovery.  The
workflow therefore requires a capable model or person to review the active case and selected
regressions against the sketch before another case is revealed.

\section*{Artifact availability}

The argument, method, experimental design, reported results, and limitations are complete in
this paper.  Appendix~\ref{app:catsynth-supplement} provides the illustrated audit and
reproduction companion.  The public artifact adds CatSynth's synthetic fixtures, initial and
evolved sketches, Developer and Oracle adapters, generated code and prompts, per-generation gate
outcomes and diffs, compact token ledgers, browser application, and full experiment histories.
The artifact is available in the
\href{https://github.com/open-horizon-labs/counterexample-supplemented-sketches}{public repository}.

\section*{Use of generative AI}

The CatSynth experiments use generative AI as described in
Section~\ref{sec:worked-example} and Appendix~\ref{app:catsynth-supplement}.  OpenAI Codex also
assisted with experiment orchestration, code and source editing, and prose revision.  The
authors reviewed the generated code, prompts, analyses, citations, and text and take
responsibility for this paper.

\bibliographystyle{plain}
\small
\bibliography{references}

\normalsize
\clearpage
\appendix
\section{CatSynth Artifact Supplement}
\label{app:catsynth-supplement}

\begingroup
\let\section\subsection
\let\subsection\subsubsection
\input{catsynth-worked-example}
\endgroup

\end{document}

%% file: pandoc-support.tex
\usepackage{lmodern}
\usepackage{textcomp}
\usepackage{longtable}
\usepackage{array}
\usepackage{calc}
\usepackage{etoolbox}

\newcommand{\paperlicense}{%
  \begin{center}
  \small
  \textcopyright\ 2026 Muness Castle and Eric Rubeck. Except where otherwise noted, this
  paper and its original figures are licensed under
  \href{https://creativecommons.org/licenses/by/4.0/}{Creative Commons Attribution 4.0
  International (CC BY 4.0)}. System-rendered emoji glyphs shown in CatSynth screenshots are
  excluded from this license.
  \end{center}
}

\makeatletter
\patchcmd\longtable{\par}{\if@noskipsec\mbox{}\fi\par}{}{}

\newsavebox\pandoc@box
\newcommand*\pandocbounded[1]{%
  \noindent
  \sbox\pandoc@box{#1}%
  \Gscale@div\@tempa{\textheight}{\dimexpr\ht\pandoc@box+\dp\pandoc@box\relax}%
  \Gscale@div\@tempb{\linewidth}{\wd\pandoc@box}%
  \ifdim\@tempb\p@<\@tempa\p@\let\@tempa\@tempb\fi
  \makebox[\linewidth][c]{%
    \ifdim\@tempa\p@<\p@\scalebox{\@tempa}{\usebox\pandoc@box}%
    \else\usebox{\pandoc@box}%
    \fi
  }%
}
\makeatother

\providecommand{\tightlist}{%
  \setlength{\itemsep}{0pt}\setlength{\parskip}{0pt}}

%% file: catsynth-worked-example.tex
This is the audit and reproduction companion to \emph{Agentic Synthesis against
Counterexample-Supplemented Sketches}. The paper contains the complete argument, method,
experimental design, reported results, and limitations. The distributed paper appends this same
material; \texttt{catsynth-supplement.pdf} packages it separately for readers who want the implementation
trace on its own. Nothing in the supplement extends the paper\textquotesingle s method or claims.

This supplement provides the implementation depth that would interrupt the paper: the complete
CatSynth generation sequence, screenshots, reproduction commands, source map, and links from
each counterexample to the revised sketch, code, prompt, and gate result. The browser makes the
finished artifacts visible. The experiment harness asks a coding model to evolve the sketch,
deterministic code, and model prompt one counterexample at a time.

CatSynth uses synthetic breed attributes and policy rows selected to make the control loop easy
to inspect. Nothing here is pet-selection or medical advice.

\section{The artifact boundary}\label{the-artifact-boundary}

For orientation, CatSynth maps the paper\textquotesingle s method onto four operational responsibilities:

\begin{itemize}
\tightlist
\item
  The \textbf{operator} decides whether a failing case is an authoritative counterexample to the
  current sketch.
\item
  The \textbf{evolved sketch} carries the policy learned from every accepted counterexample.
\item
  The \textbf{CE archive} preserves every approved case and explains why the sketch changed.
\item
  The \textbf{regression set} checks selected consequences against generated code.
\end{itemize}

The code and prompt are replaceable. A clean implementation should be regenerable from the
evolved sketch and known-code anchors without replaying the CE archive as generation context.

The current method requires two checks before revealing another candidate: the deterministic
gate runs the active case and \texttt{R}, and a capable model or person reviews those simulated outputs
against the current sketch. The same reviewer may also approve a sketch change when authorized,
but judging the output and changing the sketch remain separate decisions.

CatSynth makes one simplifying choice: its run is small, so every accepted CE is also a
regression case (\texttt{R\ =\ A}). That is why its gate sees every promoted case. The general method does
not require the full archive to remain in the executable regression set.

The iterative arm also obeys one generation boundary:

\begin{quote}
Developer sees one active failure. It never receives the CE archive or regression set as a
bulk prompt.
\end{quote}

The run starts from an initial sketch and empty \texttt{strategy.py} and \texttt{oracle\_prompt.txt} files.
Developer returns complete replacements for all three evolving artifacts:

\begin{verbatim}
SKETCH.md
strategy.py
oracle_prompt.txt
\end{verbatim}

After the initial implementation passes its anchor, the harness reveals one proposed
counterexample. It evaluates the case before approval. A case that already passes is coverage,
not a counterexample; the harness records that result and continues to the next proposal without
sending it to Developer.

A failing case becomes a CE only if it exposes missing sketch policy and the operator approves
the corrected behavior. Developer then receives the current three files and that one failure. It
must revise the sketch with the code or prompt. The gate runs the active case and current
regressions. If an earlier case regresses, that failed regression becomes the next single
Developer input. The harness reveals no new case until the gate is green.

That sentence describes the captured experiment. The capture predates the explicit requirement
to record a separate post-repair sketch review over the active case and \texttt{R}. It therefore tests
sketch evolution and the deterministic gate, not the complete two-check acceptance rule or the
cost of repeated sketch review.

On August 2, 2026, a corrected two-check continuation completed the GPT-5.4-mini benchmark after
the initial wrapper had stopped at an arbitrary 24-repair bound. Its final visible and withheld
scores are reported below. GPT-5.3-codex-spark remains incomplete at CE-012 and is a bounded
failure record, not a model ranking. The compact record is under
\texttt{examples/catsynth/experiment/results/two-check-reruns-20260802/}.

The captured run freezes candidate cases and authoritative expected outputs before execution,
then treats them as a simulated operator-approved stream. That makes the experiment
reproducible, but it simulates the live approval step rather than giving the model authority to
approve policy. Every accepted CE in the captured run changes \texttt{SKETCH.md}. The informational
restriction prevents Developer from copying future counterexamples into the sketch because it
never sees them.

\section{Reproduce the run}\label{reproduce-the-run}

From \texttt{examples/catsynth}:

\begin{verbatim}
uv run --with-requirements requirements.txt \
  python experiment/adaptive_open_world_experiment.py \
  --model gpt-5.4-mini \
  --max-repairs 12
\end{verbatim}

This published three-path driver uses Codex App Server. The run used GPT-5.4-mini with low effort.
The adapter disabled tools and environment access, disabled provider fallback, and used an
ephemeral thread for every call. It consumed 173 model calls and 3,251,645 recorded model tokens
across all three paths, including post-acceptance evaluation. Raw JSON-RPC transcripts stayed
local; the repository retains every generated sketch, strategy, prompt, failure, gate outcome,
diff, and usage total. The \href{https://github.com/open-horizon-labs/counterexample-supplemented-sketches/tree/main/examples/catsynth}{CatSynth README}
documents the portable driver for Codex App Server and OpenAI-compatible Chat Completions
endpoints.

The captured evidence is in the
\href{https://github.com/open-horizon-labs/counterexample-supplemented-sketches/tree/main/examples/catsynth/experiment/results/gpt-5.4-mini-adaptive-open-world-v2-20260712}{checked-in run directory}.

\section{The starting point}\label{the-starting-point}

The initial sketch fixes the public interface and the known preference ranking:

\begin{verbatim}
recommend(profile, breeds, rules, oracle_tags)
\end{verbatim}

It defines the ordinal maps and exact preference weights, but deliberately leaves two policy
surfaces open:

\begin{itemize}
\tightlist
\item
  rule rows exist, but the sketch does not yet say what matched rules do;
\item
  \texttt{oracle\_tags} exists, but no tag has a meaning yet.
\end{itemize}

That is the partial specification. It contains enough detail to generate and test an initial
strategy without preloading the later policy discoveries.

The clean-room baseline contains no implementation. Both rebuild controls copy the same empty
files at each discovery epoch.

\section{Generation 000: the initial strategy}\label{generation-000-the-initial-strategy}

Developer generates the first complete sketch, deterministic strategy, and narrative prompt
from the partial sketch and empty implementation files. The initial preference anchor passes:

\begin{verbatim}
initial-preference-ranking: PASS
gate: 1/1
\end{verbatim}

Only then does the harness reveal the first domain counterexample. The complete generation is
preserved under \texttt{arms/sketch-ce/generations/000-initial-generation/}.

\section{Generation 001: hard policy must filter before ranking}\label{generation-001-hard-policy-must-filter-before-ranking}

The first profile describes an owner with mild allergies who wants a large, fluffy,
affectionate cat. The current preference-only strategy returns Persian.

The approved counterexample requires:

\begin{verbatim}
operation:    recommend
breed:        Siberian
cited_rules:  [allergy_requires_hypoallergenic]
\end{verbatim}

The synthetic policy row says that mild or severe allergies activate a hard \texttt{forbid} rule for
breeds whose \texttt{hypoallergenic} field is false. The correction adds a general ordering rule:
filter hard-policy violations before ranking the survivors.

Developer receives only this failure and the current files. It revises the sketch and
\texttt{strategy.py} to interpret the rule row generically. CatSynth\textquotesingle s \texttt{R\ =\ A} gate then passes the initial anchor
and CE1:

\begin{verbatim}
initial-preference-ranking: PASS
ce-001-allergy-override:    PASS
gate: 2/2
\end{verbatim}

The browser presents the same near miss as a teaching surface:

\pandocbounded{\includegraphics[keepaspectratio,alt={Naive resolver choosing the tempting Persian output}]{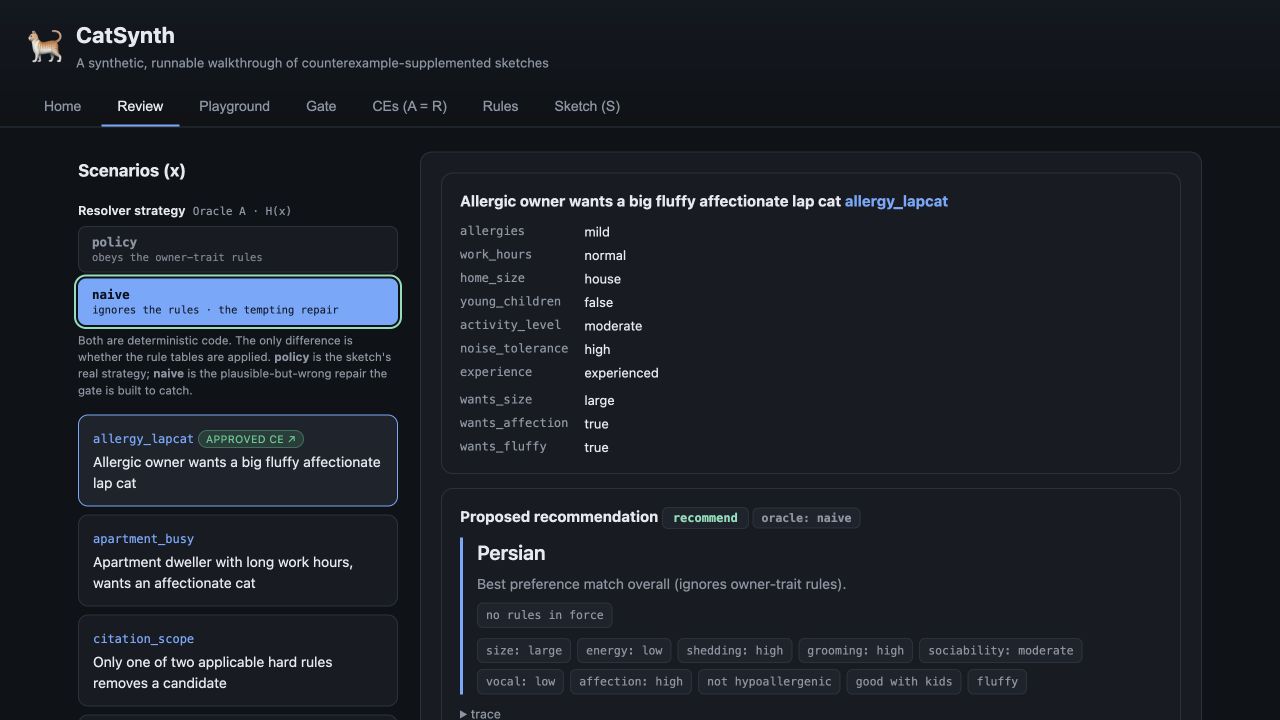}}

The point is the encoded ordering, not the cat claim. Persian closes the visible preference gap;
Siberian also closes it while satisfying the approved hard rule.

\section{Generation 002: hard rules compose and may force abstention}\label{generation-002-hard-rules-compose-and-may-force-abstention}

The second counterexample activates five hard rules: severe allergies, apartment constraints,
long work hours, and young children. The current implementation understands the first allergy
operator but still returns Balinese and cites only that rule.

The approved expectation is:

\begin{verbatim}
operation: abstain
breed: null
cited_rules:
  - allergy_requires_hypoallergenic
  - apartment_no_high_energy
  - children_require_good_with_children
  - long_hours_no_high_sociability
  - severe_allergy_low_shedding
\end{verbatim}

Developer generalizes again. The revised sketch and code support the additional profile and cat
predicate operators, apply every applicable hard rule, and abstain rather than relaxing policy
when no candidate remains.

The full regression passes:

\begin{verbatim}
initial anchor: PASS
CE1:            PASS
CE2:            PASS
gate:           3/3
\end{verbatim}

\section{Generation 003: the prompt learns a controlled tag}\label{generation-003-the-prompt-learns-a-controlled-tag}

The third profile has no new structured policy row. Its missing information is in the narrative
note:

\begin{verbatim}
I travel for work every few weeks and my last cat seemed miserable and lonely
whenever I was gone for days.
\end{verbatim}

Before approval, the current prompt emits no tags and the deterministic strategy recommends
Balinese. CE3 adds one controlled narrative classification to the evolved sketch:

\begin{itemize}
\tightlist
\item
  travel, repeated absence, or concern about loneliness maps to exactly \texttt{avoid\_needy};
\item
  the prompt classifies only the supplied note and may not invent unrelated tags;
\item
  the deterministic meaning of \texttt{avoid\_needy} remains an open hole.
\end{itemize}

This case compares only \texttt{oracle\_tags}. Developer revises the prompt and sketch. The gate
passes 4/4 even though the strategy still chooses Balinese, because CE3 has not yet added a
deterministic meaning for the tag.

\begin{verbatim}
initial anchor: PASS
CE1:            PASS
CE2:            PASS
CE3 tags:       PASS
gate:           4/4
\end{verbatim}

\section{Generation 004: a new counterexample closes the deterministic hole}\label{generation-004-a-new-counterexample-closes-the-deterministic-hole}

After CE3, the prompt emits \texttt{avoid\_needy}, but the strategy still recommends Balinese. The
harness evaluates the next proposed case before approval. It fails on the breed field, so CE4
is a genuine new counterexample rather than another explanation pasted into CE3.

CE4 adds two connected rules:

\begin{itemize}
\tightlist
\item
  \texttt{avoid\_needy} applies a one-point soft penalty to breeds with high sociability;
\item
  base preference scoring continues to use only the three explicit \texttt{wants\_*} fields: size,
  affection, and fluffiness. Default activity, noise, and experience values do not add score
  unless an approved sketch clause gives them semantics.
\end{itemize}

Developer revises the sketch and deterministic code. The prompt remains green. The gate
passes 5/5:

\begin{verbatim}
initial anchor: PASS
CE1:            PASS
CE2:            PASS
CE3 tags:       PASS
CE4 ranking:    PASS
gate:           5/5
\end{verbatim}

\section{Generation 005: distinct soft rules compose}\label{generation-005-distinct-soft-rules-compose}

The next accepted case activates three different soft policy predicates. The current strategy
returns Abyssinian because it stops short of applying the full soft-policy total. The approved
output is British Shorthair.

CE6 adds a general rule: every distinct applicable \texttt{discourage} predicate contributes before the
final ranking and tie-break. Developer revises the sketch and strategy; the complete gate passes
6/6.

\section{Generation 006: duplicate concerns count once across surfaces}\label{generation-006-duplicate-concerns-count-once-across-surfaces}

The next profile expresses the same high-energy concern twice: once through a structured policy
row and once through a narrative note. Before repair, the prompt emits no tag. The approved output
requires \texttt{avoid\_high\_energy}, while the ranking must apply the shared energy predicate only once.

CE7 makes the sketch join structured rules and narrative tags by the semantic triple of cat
attribute, operator, and value. Developer revises the sketch, prompt, and strategy; the complete
gate passes 7/7.

\section{Generation 007: unknown safety data escalates}\label{generation-007-unknown-safety-data-escalates}

The next profile supplies \texttt{unknown} for allergy status. The current strategy treats it like no
allergy and recommends Persian. CE10 distinguishes missing or unsupported safety data from a
negative value: the implementation must return \texttt{escalate} with no breed until an operator can
clarify the input.

Developer records the rule in the sketch and implements the escalation. The complete gate passes
8/8.

\section{Generation 008: malformed applicable policy escalates with provenance}\label{generation-008-malformed-applicable-policy-escalates-with-provenance}

The final accepted case supplies an applicable hard policy row whose cat operator is unsupported.
The current strategy silently ignores it and recommends Persian. CE12 requires \texttt{escalate} and
cites \texttt{invalid\_reviewer\_policy}, preserving which policy source needs repair.

Developer adds the validation and provenance rule to the sketch and strategy. The full retained
gate passes the initial anchor and all eight accepted counterexamples: 9/9.

In the historical deterministic-only capture, the final retained Sketch-CE implementation passes 18/21 withheld cases. It misses two multi-tag
cases because no accepted discovery has yet defined \texttt{avoid\_vocal}, and it misses one normalized
severe-allergy variant. Those failures show where another open-world counterexample could extend
the current sketch.

\section{What each generation archive proves}\label{what-each-generation-archive-proves}

Every generation directory under \texttt{arms/} contains:

{\def\LTcaptype{none} 
\begin{longtable}[]{@{}
  >{\raggedright\arraybackslash}p{(\linewidth - 2\tabcolsep) * \real{0.3100}}
  >{\raggedright\arraybackslash}p{(\linewidth - 2\tabcolsep) * \real{0.6900}}@{}}
\toprule\noalign{}
\begin{minipage}[b]{\linewidth}\raggedright
File
\end{minipage} & \begin{minipage}[b]{\linewidth}\raggedright
Evidence
\end{minipage} \\
\midrule\noalign{}
\endhead
\bottomrule\noalign{}
\endlastfoot
\texttt{SKETCH.md} & Developer\textquotesingle s complete revised strategy \\
\texttt{strategy.py} & Complete deterministic implementation \\
\texttt{oracle\_prompt.txt} & Complete prompt implementation \\
\texttt{metadata.json} & Active failure or failures, accepted CE IDs, compact gate outcome, usage, and diffs \\
\end{longtable}
}

The Sketch-CE repair metadata identifies one active failure and no unrevealed case. The rebuild
controls preserve each generated state and the visible failures returned after a failed gate.
A reader can therefore inspect the information boundary and code evolution directly.

\section{Audit questions}\label{audit-questions}

The retained histories let a reader answer five questions without relying on the paper\textquotesingle s prose:

\begin{enumerate}
\def\labelenumi{\arabic{enumi}.}
\tightlist
\item
  Which single failure was visible to each Developer generation?
\item
  How did the sketch, deterministic code, and prompt change together?
\item
  How did every accepted CE change the sketch?
\item
  Which regression gate ran after each revision?
\item
  Which proposed cases became accepted CEs, and which were recorded only as coverage?
\end{enumerate}

\section{The open-world comparison}\label{the-open-world-comparison}

The conceptual contrast remains spec-first versus Sketch-CE: a complete specification works when
the problem is already known, while Sketch-CE changes the governing sketch as the world reveals
new policy. The captured open-world experiment adds two controls to identify what carries that
policy forward.

\begin{itemize}
\tightlist
\item
  \textbf{Replay-all} rebuilds from the initial sketch and every accepted case known at the current
  discovery epoch. It asks the model to infer policy again from raw examples.
\item
  \textbf{Evolved-sketch rebuild} discards code and prompt, then receives only the current evolved
  sketch and known-code anchors. If its gate fails, it receives one visible failure at a time.
  This is the method\textquotesingle s clean-regeneration test.
\item
  \textbf{Sketch-CE with retained code} keeps the current sketch, code, and prompt and repairs each
  newly accepted case.
\end{itemize}

{\def\LTcaptype{none} 
\begin{longtable}[]{@{}
  >{\raggedright\arraybackslash}p{(\linewidth - 6\tabcolsep) * \real{0.4000}}
  >{\raggedleft\arraybackslash}p{(\linewidth - 6\tabcolsep) * \real{0.2000}}
  >{\raggedleft\arraybackslash}p{(\linewidth - 6\tabcolsep) * \real{0.2000}}
  >{\raggedleft\arraybackslash}p{(\linewidth - 6\tabcolsep) * \real{0.2000}}@{}}
\toprule\noalign{}
\begin{minipage}[b]{\linewidth}\raggedright
Measure
\end{minipage} & \begin{minipage}[b]{\linewidth}\raggedleft
Replay-all
\end{minipage} & \begin{minipage}[b]{\linewidth}\raggedleft
Evolved-sketch rebuild
\end{minipage} & \begin{minipage}[b]{\linewidth}\raggedleft
Sketch-CE (retained code)
\end{minipage} \\
\midrule\noalign{}
\endhead
\bottomrule\noalign{}
\endlastfoot
\textbf{All recorded model tokens, including evaluation} & \textbf{1,061,834} & \textbf{998,307} & \textbf{1,191,504} \\
Tokens through visible acceptance & 891,880 & 828,628 & 1,021,822 \\
Post-acceptance evaluation tokens & 169,954 & 169,679 & 169,682 \\
Developer calls & 15 & 16 & 9 \\
Developer tokens & 400,081 & 371,050 & 217,576 \\
Runtime Oracle tokens through acceptance & 491,799 & 457,578 & 657,478 \\
Specification Oracle tokens & 0 & 0 & 146,768 \\
Rebuilds & 9 & 9 & 1 \\
Extra repair attempts & 6 & 7 & 0 \\
Prior regressions on first attempt & 2 & 7 & 0 \\
Artifact churn lines & 2,394 & 2,326 & 719 \\
Final strategy LOC & 224 & 228 & 298 \\
Final decision nodes & 77 & 70 & 110 \\
Accepted CE evaluation & 8/8 & 8/8 & 8/8 \\
Withheld evaluation & 15/21 & 19/21 & 18/21 \\
\end{longtable}
}

The first row includes every recorded Developer, Runtime Oracle, Specification Oracle, and
post-acceptance evaluation call. Tokens through acceptance exclude only the final visible and
withheld evaluation. Provider totals count input plus output; cached input and reasoning are
subsets, not additional tokens.

The candidate cases were external inputs. Sketch-CE paid to classify them and propose general
rules for failures. Both controls inherited the promotion schedule, and evolved-sketch rebuild
also inherited the sketch checkpoints. Their totals omit discovery work and are not end-to-end
price rankings.

Sketch-CE with retained code used less Developer work and produced less cumulative churn.
Evolved-sketch rebuild passed 19/21 withheld cases versus 15/21 for Replay-all. That pattern
motivated the protocol-correct continuation below. Because this capture lacked required sketch
review and approval, the scores are not evidence for the current two-check method. The retained
strategy was the largest and had the most decision nodes, so that path shows less rework during
evolution, not better final maintainability.

This is one model, one candidate order, and one sample per path. It does not establish universal
cost, correctness, or maintainability superiority.

\section{Protocol-correct continuation}\label{protocol-correct-continuation}

On 2026-08-02, the same mini arm was completed with deterministic acceptance, separate review of
each simulated output against the current sketch, manual approval of every sketch change, and an
explicit authority-and-preservation contract in every Developer prompt.

{\def\LTcaptype{none} 
\begin{longtable}[]{@{}
  >{\raggedright\arraybackslash}p{(\linewidth - 6\tabcolsep) * \real{0.4000}}
  >{\raggedleft\arraybackslash}p{(\linewidth - 6\tabcolsep) * \real{0.2000}}
  >{\raggedleft\arraybackslash}p{(\linewidth - 6\tabcolsep) * \real{0.2000}}
  >{\raggedleft\arraybackslash}p{(\linewidth - 6\tabcolsep) * \real{0.2000}}@{}}
\toprule\noalign{}
\begin{minipage}[b]{\linewidth}\raggedright
Evaluation
\end{minipage} & \begin{minipage}[b]{\linewidth}\raggedleft
Replay-all
\end{minipage} & \begin{minipage}[b]{\linewidth}\raggedleft
Evolved-sketch rebuild
\end{minipage} & \begin{minipage}[b]{\linewidth}\raggedleft
Sketch-CE (retained code)
\end{minipage} \\
\midrule\noalign{}
\endhead
\bottomrule\noalign{}
\endlastfoot
Visible accepted cases & 8/8 & 8/8 & 8/8 \\
Withheld cases & 14/21 & 17/21 & 16/21 \\
\end{longtable}
}

This replaces the old 19/21-versus-15/21 headline for the current method. The direction remains:
the reviewed evolved sketch beats raw replay, now by three withheld cases. Retaining code finishes
between the controls and does not beat clean regeneration from the sketch in this sample. The old
table remains the captured source for token, repair, and churn measurements.

The clarification changed what the experiment can observe. The old capture could record a passing
repair but not whether its draft sketch filled an unresolved hole or dropped an existing anchor.
The two-check run rejected drafts that assigned an empty-catalog operation before CE11, gave the
CE3 \texttt{avoid\_needy} tag a ranking effect before CE4, or removed stable input-shape clauses. It also
adjudicated reviewer math errors rather than converting them into policy changes. The result is
not an alignment claim: it makes proposed policy drift reviewable before it becomes accepted
sketch text.

\section{How the UI illustrates the finished artifacts}\label{how-the-ui-illustrates-the-finished-artifacts}

The experiment is the synthesis history. The browser is the finished inspection surface.

\begin{verbatim}
cd examples/catsynth
uv run --with-requirements requirements.txt python cli.py seed --no-wiki
uv run --with-requirements requirements.txt python cli.py serve
\end{verbatim}

Open \url{http://127.0.0.1:8000}.

\pandocbounded{\includegraphics[keepaspectratio,alt={CatSynth home screen mapping the method artifacts}]{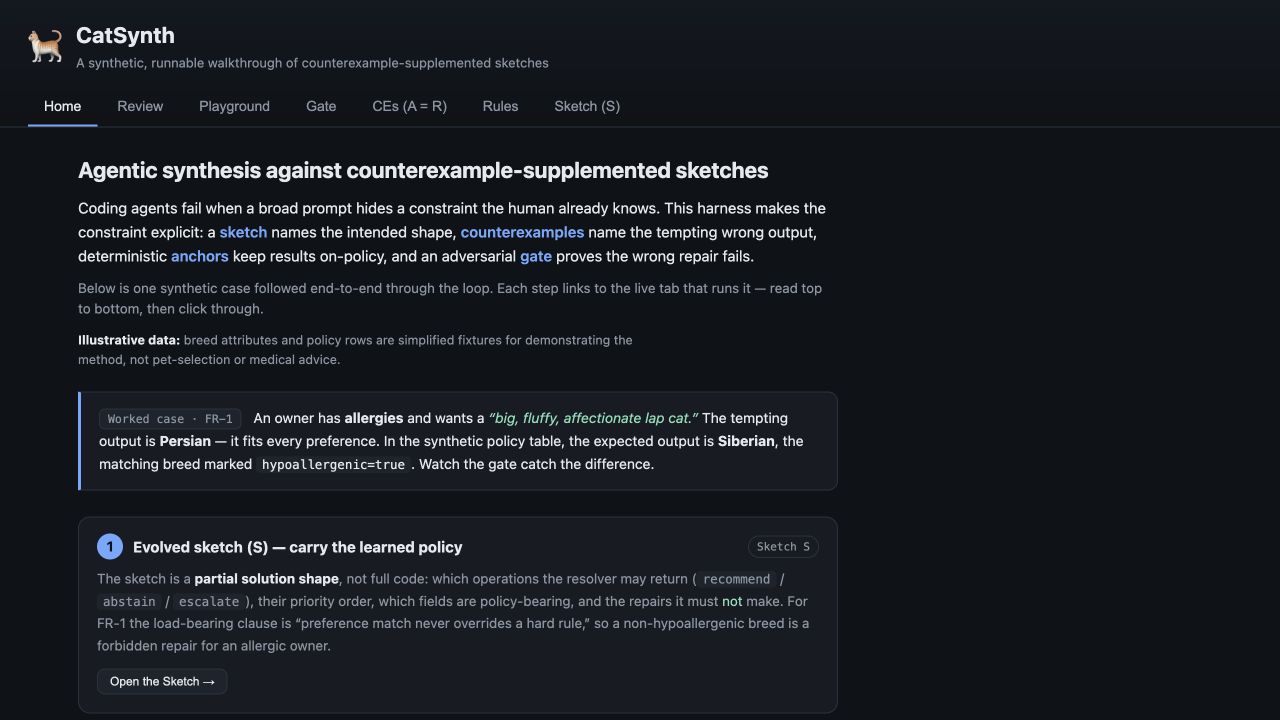}}

The UI exposes the stable repository artifacts:

\begin{enumerate}
\def\labelenumi{\arabic{enumi}.}
\tightlist
\item
  \textbf{Sketch} - the strategy, policy order, holes, and abstention behavior.
\item
  \textbf{CE archive / regression set} - approved expected outputs, tempting outputs, violated rules,
  and sketch links. CatSynth shows the same cases in both roles because \texttt{R\ =\ A}.
\item
  \textbf{Oracle A} - deterministic hard-rule filtering, ranking, and abstention.
\item
  \textbf{Oracle B} - prompt-mediated narrative tags constrained to a controlled vocabulary.
\item
  \textbf{Gate} - replay and approved-output comparison over CatSynth\textquotesingle s regression set.
\end{enumerate}

The current method also requires sketch review over the active case and \texttt{R}. The browser lets a
person inspect the sketch and concrete outputs, but the captured experiment does not record that
review after every repair.

The browser\textquotesingle s Review button records only a local operator decision in SQLite. It does not invoke
Developer or revise \texttt{SKETCH.md}. The experiment history, not that button, is the evidence for the
complete CE-to-sketch-and-code loop.

The CE archive / regression page keeps both sides of the focal correction:

\pandocbounded{\includegraphics[keepaspectratio,alt={Approved CatSynth counterexample in archive A and regression set R}]{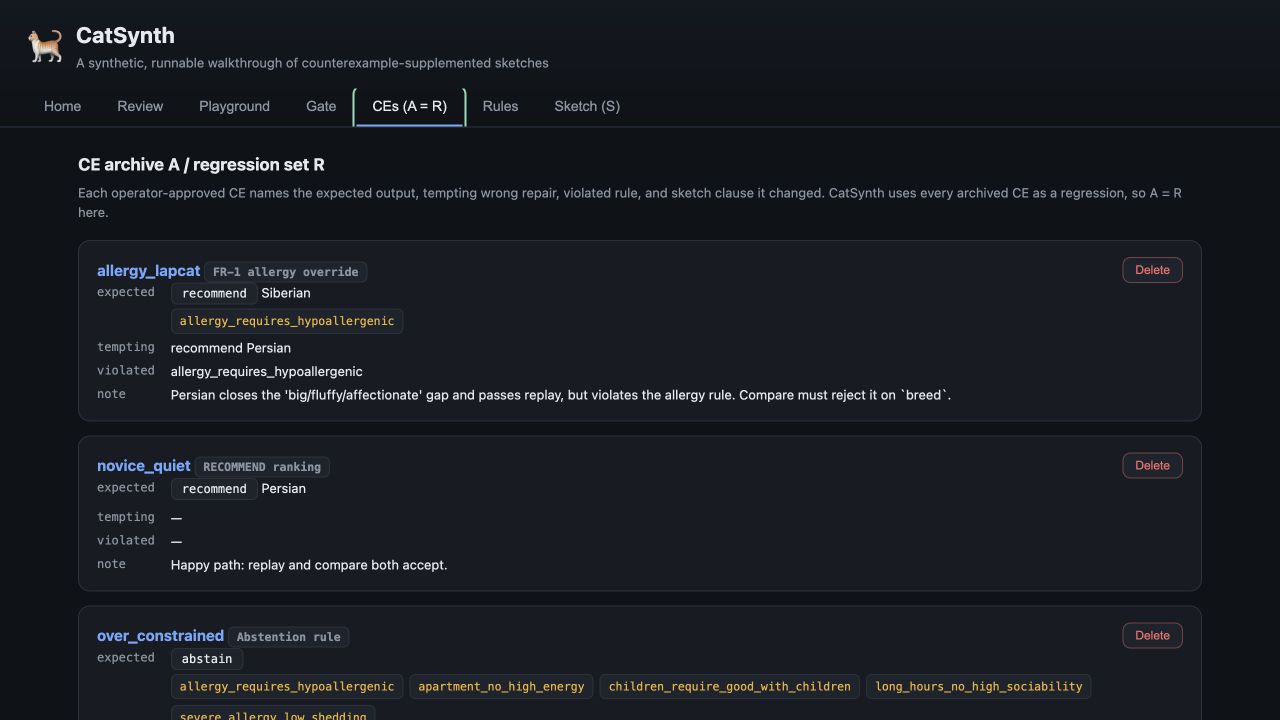}}

An expected output records what should happen. The tempting output and violated rule record why a
plausible alternative must continue to fail.

\section{Why replay and approved-output compare stay separate}\label{why-replay-and-approved-output-compare-stay-separate}

Replay asks whether the candidate closes the visible state gap. For the focal recommendation, it
checks the encoded size, affection, and fluffiness preferences. It deliberately does not decide
the hard allergy policy.

Approved-output compare checks the approved policy-bearing fields:

\begin{verbatim}
operation
breed
cited_rules
\end{verbatim}

The naive Persian result can therefore pass replay and fail approved-output compare:

\includegraphics[width=0.85\linewidth,height=\textheight,keepaspectratio,alt={Naive gate where replay passes and approved-output compare fails}]{figures/catsynth/04-naive-gate.png}

The split makes the failure actionable. The candidate repaired the visible preference state but
did not follow the approved policy.

Sketch review is a third question: does the simulated output follow the current sketch, including
meaning the deterministic checkers do not encode? A capable model or person performs that review.
The captured screenshot still labels approved-output compare as ``semantic compare,'' the original
terminology.

\section{Source map}\label{source-map}

{\def\LTcaptype{none} 
\begin{longtable}[]{@{}
  >{\raggedright\arraybackslash}p{(\linewidth - 2\tabcolsep) * \real{0.3100}}
  >{\raggedright\arraybackslash}p{(\linewidth - 2\tabcolsep) * \real{0.6900}}@{}}
\toprule\noalign{}
\begin{minipage}[b]{\linewidth}\raggedright
Paper concept
\end{minipage} & \begin{minipage}[b]{\linewidth}\raggedright
Repository artifact
\end{minipage} \\
\midrule\noalign{}
\endhead
\bottomrule\noalign{}
\endlastfoot
Initial sketch & \href{https://github.com/open-horizon-labs/counterexample-supplemented-sketches/blob/main/examples/catsynth/experiment/initial_sketch.md}{\texttt{initial\_sketch.md}} \\
Candidate manifest & \href{https://github.com/open-horizon-labs/counterexample-supplemented-sketches/blob/main/examples/catsynth/experiment/adaptive_candidate_manifest.json}{\texttt{adaptive\_candidate\_manifest.json}} \\
Operator references & \href{https://github.com/open-horizon-labs/counterexample-supplemented-sketches/blob/main/examples/catsynth/experiment/cases.json}{\texttt{cases.json}} \\
Accepted CE archive and regression set (\texttt{R\ =\ A}) & \href{https://github.com/open-horizon-labs/counterexample-supplemented-sketches/blob/main/examples/catsynth/experiment/results/gpt-5.4-mini-adaptive-open-world-v2-20260712/promoted-corpus.json}{\texttt{promoted-corpus.json}} \\
Experiment driver & \href{https://github.com/open-horizon-labs/counterexample-supplemented-sketches/blob/main/examples/catsynth/experiment/run_experiment.py}{\texttt{run\_experiment.py}} \\
Codex App Server adapter & \href{https://github.com/open-horizon-labs/counterexample-supplemented-sketches/blob/main/examples/catsynth/catsynth/codex_app_server.py}{\texttt{codex\_app\_server.py}} \\
OpenAI-compatible adapter & \href{https://github.com/open-horizon-labs/counterexample-supplemented-sketches/blob/main/examples/catsynth/catsynth/openai_compat.py}{\texttt{openai\_compat.py}} \\
Oracle A & \href{https://github.com/open-horizon-labs/counterexample-supplemented-sketches/blob/main/examples/catsynth/catsynth/oracle_a.py}{\texttt{oracle\_a.py}} \\
Oracle B & \href{https://github.com/open-horizon-labs/counterexample-supplemented-sketches/blob/main/examples/catsynth/catsynth/oracle_b.py}{\texttt{oracle\_b.py}} \\
UI fixtures & \href{https://github.com/open-horizon-labs/counterexample-supplemented-sketches/blob/main/examples/catsynth/catsynth/seed.py}{\texttt{seed.py}} \\
UI gate & \href{https://github.com/open-horizon-labs/counterexample-supplemented-sketches/blob/main/examples/catsynth/catsynth/gate.py}{\texttt{gate.py}} \\
UI app & \href{https://github.com/open-horizon-labs/counterexample-supplemented-sketches/blob/main/examples/catsynth/catsynth/app.py}{\texttt{app.py}} and \href{https://github.com/open-horizon-labs/counterexample-supplemented-sketches/tree/main/examples/catsynth/catsynth/static}{\texttt{static/}} \\
Comparison harness & \href{https://github.com/open-horizon-labs/counterexample-supplemented-sketches/blob/main/examples/catsynth/experiment/adaptive_open_world_experiment.py}{\texttt{adaptive\_open\_world\_experiment.py}} \\
Captured run & \href{https://github.com/open-horizon-labs/counterexample-supplemented-sketches/tree/main/examples/catsynth/experiment/results/gpt-5.4-mini-adaptive-open-world-v2-20260712}{Published run} \\
\end{longtable}
}

\section{Claim boundary}\label{claim-boundary}

CatSynth\textquotesingle s iterative gate establishes finite-regression correctness only for these fixtures,
expected outputs, and evaluators. The withheld cases provide limited evidence beyond the gate.
Neither result establishes correctness for all owner profiles, real breed facts, bad expected
outputs or checkers, unencoded policy, future model behavior, or other models and reveal orders.
The capture also does not evaluate the method\textquotesingle s separate sketch-review step. CatSynth preserves
the evidence needed to inspect that boundary.